\documentclass[a4paper]{article}
\pdfoutput=1
\usepackage{microtype}
\usepackage[colorlinks=true,urlcolor=Blue,citecolor=Green,linkcolor=BrickRed]{hyperref}
\usepackage[usenames,dvipsnames]{xcolor}
\usepackage{amsmath}
\usepackage[utf8]{inputenc}
\usepackage{algorithm}

\usepackage{algorithmicx}
\usepackage[noend]{algpseudocode}
\usepackage[noadjust]{cite}
\usepackage[OT4]{fontenc}
\usepackage{amsthm}
\usepackage{amssymb}
\usepackage{amsmath}
\usepackage{amsfonts}
\usepackage{todonotes}
\usepackage{authblk}
\usepackage{rotating}
\usepackage{xspace}
\usepackage{mathtools}
\usepackage{thm-restate}

% For cleverref compatibilityf
\newtheorem{theorem}{Theorem}[section]

\newtheorem{lemma}[theorem]{Lemma}

\theoremstyle{definition}

\newenvironment{proofs}{%
  \proof}{\endproof}
\algnewcommand{\LeftComment}[1]{\Statex \(\triangleright\) #1}

%tikzpicture
\usetikzlibrary{decorations.pathreplacing}

% params: 1,2 -> beg; 3 -> xshift,  4 -> yshift,  5 -> nb
\def\biklika #1#2#3#4#5#6#7{
  \draw[thick]  (#1, #2) rectangle ++(#3,#4);
  \node at (#1 + #3/2,#2 + #4/2) {\large  $#6_{#5}$};
  \pgfmathsetmacro\sh{0+2}
  \draw[thick]  (#1,#2+ \sh) rectangle ++(#3,#4);
 \node at (#1 + #3/2,#2 + #4/2 +\sh) {\large  $#7_{#5}$}; 
  \draw[]  (#1 + #3*1/4, #2+#4) -- ++(0,#4);
  \draw[]  (#1 + #3*3/4, #2+#4) -- ++(0,#4);
  \draw[]  (#1 + #3*1/4, #2+#4) -- ++(1,#4);
  \draw[]  (#1 + #3*3/4, #2+#4) -- ++(-1,#4);
}
% params: 1,2 -> beg; 3 -> xshift,  4 -> yshift
\def\rest #1#2#3#4{
    \draw[thick,rounded corners=6pt]  (#1, #2) rectangle ++(#3,#4);
  \node at (#1 + #3/2,#2 + #4/2) {\large  $A_{\text{rest}}$};
  \pgfmathsetmacro\sh{#2+2}
    \draw[thick,rounded corners=6pt]  (#1, \sh) rectangle ++(#3,#4);
  \node at (#1 + #3/2,#2 + #4/2 + \sh) {\large  $B_{\text{rest}}$};  
  \draw[]  (#1 + #3*1/4, #2+#4) -- ++(0,#4);
  \draw[]  (#1 + #3*1/4, #2+#4) -- ++(1,#4);
  \draw[]  (#1 + #3*1/2, #2+#4) -- ++(0,#4);
}
% params: 1,2 -> beg; 3 -> shift,
\def\wielokropek #1#2#3{
  \foreach \x in {1,...,3} {
    \pgfmathsetmacro\re{\x - 1}
     \draw[fill] (#1 + \re * #3 ,#2) circle [radius=0.02];
  }
}
\def\restb #1#2#3#4{
    \draw[thick,rounded corners=6pt]  (#1, #2) rectangle ++(#3,#4);
  \node at (#1 + #3/2,#2 + #4/2) {\large  $A_{\text{rest}}$};
  \pgfmathsetmacro\sh{#2+2}
    \draw[thick,rounded corners=6pt]  (#1, \sh) rectangle ++(#3,#4);
  \node at (#1 + #3/2,#2 + #4/2 + \sh) {\large  $B_{\text{rest}}$};  
 \draw[line width=0.35mm]  (#1 + #3*1/2, #2+#4) -- ++(0,#4);
  \draw[line width=0.35mm]  (#1 + #3*1/4, #2+#4) -- ++(0,#4);
  \draw[line width=0.35mm] (#1 + #3*1/4, #2+#4) -- ++(1,#4);
}

% Commands
\newcommand{\Oh}{\mathcal{O}}

\newcommand{\nodelabel}{\ell}
\newcommand{\bipnodelabel}{\ell_\text{bip}}

\newcommand{\fru}[2]{\{#1, \the\numexpr #1 + 1 \relax, \ldots, #2\}}
\newcommand{\fruu}[3]{\{#1, #2, \ldots, #3\}}

\DeclarePairedDelimiter{\ceil}{\lceil}{\rceil}
\DeclarePairedDelimiter{\floor}{\lfloor}{\rfloor}

\title{Efficient Labeling for Reachability in Digraphs}
\date{}
\author[1]{Maciej Dulęba}
\author[1]{Paweł Gawrychowski}
\author[1]{Wojciech Janczewski}
\affil[1]{Institute of Computer Science, University of Wrocław, Poland}

\begin{document}

\maketitle

\begin{abstract}  
We consider labeling nodes of a directed graph for reachability queries.
A reachability labeling scheme for such a graph assigns a binary string,
called a label, to each node.
Then, given the labels of nodes $u$ and $v$ and no other information about
the underlying graph, it should be possible to determine whether there exists a directed path from $u$ to $v$.
By a simple information theoretical argument and invoking the bound on the number of partial
orders, in any scheme some labels need to consist of at least $n/4$ bits, where $n$ is the number of nodes.
On the other hand, it is not hard to design a scheme with labels consisting of $n/2+\Oh(\log n)$ bits.
In the classical centralised setting,  Munro and Nicholson designed a data structure for reachability queries
consisting of $n^2/4+o(n^2)$ bits (which is optimal, up to the lower order term).
We extend their approach to obtain a scheme with labels consisting of $n/3+o(n)$ bits.
\end{abstract}

\section{Introduction}

A labeling scheme assigns a binary string, called a label, to each node in a graph.
Then, it should be possible to compute some function defined on subsets of nodes
using only labels of the nodes in that subset, and no other information about the whole graph.
Formally, a labeling scheme for a family of graphs consists of two parts, an encoder and a decoder.
The encoder receives a graph from the specified family and outputs the label of each node in this graph.
The label replaces the unique id of a node and allows the decoder to 
evaluate the desired function using only labels of the relevant nodes.
Therefore, such labeling schemes are often called {\em informative}~\cite{Peleg05}.
Another way of thinking about such a scheme is that we want to distribute the description
of a graph among its individual nodes.

The most important characteristic of a scheme is its size, defined as the maximum length of a label assigned to any node.
Additionally, it is desirable that the decoder is able to evaluate the function efficiently, ideally
in constant time assuming random access to all the relevant labels.
Finally, the encoder should work in polynomial time, and sometimes optimising its running time
is yet another goal.

Arguably the most basic example of a function considered in this model is adjacency: the decoder
needs to answer whether two nodes are neighbours in the graph, using only their labels.
Such a labeling scheme is closely connected to the notion of an induced universal graph for a given family of graphs,
where the induced universal graph needs to contain each graph from the family as a node-induced subgraph.
The question of the minimal size of induced universal graphs has been already studied by Moon~\cite{UniversalGraphs} several decades ago.
Recently, Alstrup, Kaplan, Thorup and Zwick \cite{AdjacencyGraphs} proved that it is possible to construct an adjacency labeling scheme
for undirected graphs with size $n/2+\Oh(1)$, which is optimal up to additive constant.
They also obtained similar tight results for directed graphs, tournaments, and bipartite graphs.
Alstrup, Dahlgaard, and Knudsen~\cite{AdjTrees} proved the optimal result for adjacency in trees, achieving labels of size $\log{n}+\Oh(1)$.
Numerous other functions were considered, both in terms of upper and lower bounds:
distance~\cite{Distance,Distance2,Distance3}, connectivity~\cite{Connectivity,Flow},
sibling or ancestor relationship~\cite{SmallDist}, nearest common ancestor in trees~\cite{NCA,NCA2}, routing~\cite{Routing} and flow~\cite{Flow}.
Often more restricted classes of graphs are analysed, most notably planar graphs~\cite{Planar1,Planar2}, bounded degree graphs~\cite{BoundedDeg}
and sparse graphs~\cite{Sparse1,Sparse2,Hubs}.
See~\cite{rotbart2016new} for a recent survey.

\paragraph{Reachability in directed graphs.}
We focus on the general class of directed graphs. Alstrup et al.~\cite{AdjacencyGraphs} considered adjacency queries
in such graphs, and designed a scheme of size $n+3$, with the obvious lower bound being $n$. The natural
next step is to consider reachability queries, in which given the labels of $u$ and $v$ the decoder should answer
if there is a directed path from $u$ to $v$. It is not hard to see that, by identifying and collapsing the strongly
connected components, it is enough to focus on directed acyclic graphs (DAGs). To extend a scheme for reachability
in DAGs to a scheme for reachability in directed graphs, we simply append $\Oh(\log n)$ bits denoting the id
of a node in its strongly connected component to the label for every node. Furthermore, we can assume that we
are given the transitive closure of a DAG, in which reachability is equivalent to adjacency.

\paragraph{Posets. }
Reachability queries in a DAG naturally correspond to comparing elements in a partially ordered set (poset).
Kleitman and Rothschild~\cite{KleitmanR70} proved the following result on the number
of posets.

\begin{theorem}[\cite{KleitmanR70}] \label{th_poset_number}
 Let $P(n)$ denote the number of posets on $n$ elements. There exists a constant $C>0$ such that
\begin{equation*}
 2^{n^2/4} \leq P(n) \leq 2^{n^2/4+ C\, n^{3/2}\log n}.
\end{equation*}
\end{theorem}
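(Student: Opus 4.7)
I would prove the \textbf{lower bound} by an explicit bipartite construction: partition $[n]$ into $A=\{1,\ldots,\lfloor n/2\rfloor\}$ and $B=[n]\setminus A$, and for each $R\subseteq A\times B$ define the height-$\leq 2$ partial order $\prec_R$ by $a\prec_R b$ iff $(a,b)\in R$. Distinct choices of $R$ yield distinct labeled posets, so $P(n)\geq 2^{\lfloor n/2\rfloor\lceil n/2\rceil}$, which is exactly $2^{n^2/4}$ for even $n$ and within the claimed bound otherwise.

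For the \textbf{upper bound} my plan is the classical Kleitman--Rothschild strategy: show that almost every labeled poset has a rigid three-level structure, and count those directly. Stratify $[n]$ by Hasse-diagram height into classes $S_0\sqcup S_1\sqcup\cdots\sqcup S_k$ with $S_0$ the minima, and write $s_i=|S_i|$. A poset with a fixed profile is uniquely encoded by (i) the bipartite relations between each consecutive pair $S_i\times S_{i+1}$, and (ii) the set of ``long-range'' covers $x\lessdot y$ with $y\in S_j$, $x\in S_i$, $j-i\geq 2$. Hence the number of posets with that profile is at most
\[
  \binom{n}{s_0,\ldots,s_k}\cdot 2^{\sum_{i=0}^{k-1}s_i s_{i+1}}\cdot 2^{E(s_0,\ldots,s_k)},
\]
where $E$ bounds the number of admissible long-range covers. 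A convexity computation shows $\sum_i s_i s_{i+1}$ is maximised at $k=2$ with $(s_0,s_1,s_2)=(n/4,n/2,n/4)$, attaining exactly $n^2/4$; profiles deviating by more than $\Theta(\sqrt{n\log n})$ from this optimum lose $\Omega(n^{3/2}\log n)$ in the exponent, and the multinomial factor is only $2^{O(n)}$. Summing over the polynomially many possible profiles absorbs everything but the main $2^{n^2/4}$ term into the error factor.

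The main obstacle will be bounding the long-range cover term $E$ in the dominant three-level case, where naively each of the $\approx n^2/16$ pairs in $S_0\times S_2$ could be a cover, apparently inflating the exponent well past $n^2/4$. I plan to follow Kleitman and Rothschild here: a pair $(a,c)\in S_0\times S_2$ can be a cover only if no $b\in S_1$ satisfies $a\prec b\prec c$---a \emph{deficient} pair---and for uniformly random bipartite layers $S_0\times S_1$ and $S_1\times S_2$ each fixed pair is deficient with probability only $(3/4)^{|S_1|}\approx(3/4)^{n/2}$. A first-moment computation, using that the deficiency indicators for pairs with disjoint coordinates are independent, shows $\sum_{\text{layers}}2^{T(\text{layers})}=(1+o(1))\cdot 2^{s_0 s_1+s_1 s_2}$, where $T$ counts deficient pairs. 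Combining this with the shape analysis above yields the stated $2^{n^2/4+Cn^{3/2}\log n}$.
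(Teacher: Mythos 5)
The paper does not actually prove this statement---it is imported verbatim from Kleitman and Rothschild---so there is no in-paper proof to compare against; I can only assess your sketch on its own terms. Your lower bound is the standard one and is essentially fine, with one small slip: for odd $n$ the bipartite construction gives only $2^{\lfloor n/2\rfloor\lceil n/2\rceil}=2^{(n^2-1)/4}$, which is strictly \emph{below} the stated $2^{n^2/4}$, so ``within the claimed bound otherwise'' is backwards; you either need a further (easy) argument for odd $n$ or to read the lower bound as $2^{\lfloor n^2/4\rfloor}$.

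The upper bound sketch identifies the correct strategy, but the decisive step is asserted rather than proved. Everything routine in your outline---the number of height profiles, the multinomial, and the bound $\sum_i s_is_{i+1}\le(\sum_{i\text{ even}}s_i)(\sum_{i\text{ odd}}s_i)\le n^2/4$, which holds for \emph{every} profile---contributes only $2^{O(n\log n)}$, so the entire content of the theorem is showing that the long-range covers cost at most a factor $2^{O(n^{3/2}\log n)}$. Your proposed route, $\sum_{\text{layers}}2^{T}=(1+o(1))\,2^{s_0s_1+s_1s_2}$ with $T$ the number of deficient pairs, is exactly this hard part, and the justification you offer---independence of deficiency indicators for coordinate-disjoint pairs---does not suffice: $2^{T}=\prod_{(a,c)}(1+X_{a,c})$ is a product over all $s_0s_2$ pairs, almost every one of which shares a coordinate with $\Theta(n)$ others, so the expectation does not factor, and a naive H\"older/grouping repair fails because $2^{\Theta(n)}(3/4)^{n/2}\to\infty$. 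One genuinely needs the Kleitman--Rothschild conditioning argument (bounding, for each $c$, how many $a$ can be deficient in terms of the sizes of the relevant up- and down-sets) or an equivalent device. Moreover, your deficiency analysis covers only the three-level case, whereas the upper bound must control covers between levels at distance at least $2$ for arbitrary $k$. Finally, your quantitative shape claims are off: perturbing the middle level by $\delta$ costs $\delta^2$ in the exponent, so a deviation of $\sqrt{n\log n}$ loses only $\Theta(n\log n)$, not $\Omega(n^{3/2}\log n)$---though this is harmless here, since the shape analysis is not needed for the coarse error term $Cn^{3/2}\log n$ at all (it matters only for the finer 1975 asymptotics).
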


\noindent This means that supporting reachability queries in a DAG requires storing at least $n^{2}/4$ bits,
while the straightforward representation as an upper triangular matrix takes about $n^2/2$ bits.
Munro and Nicholson~\cite{MunroN16} designed a succinct data structure consisting of only $n^{2}/4+o(n^{2})$
bits for this problem.

\begin{theorem}[\cite{MunroN16}]\label{th_poset_ds}
For any poset on $n$ elements, there exists a data structure consisting of $n^2/4 + \Oh(n^2\log \log n/ \log n)$
bits supporting precedence queries in constant time.
\end{theorem}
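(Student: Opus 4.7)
The plan is to exploit the fact that, by a refinement of the Kleitman--Rothschild argument underlying Theorem~\ref{th_poset_number}, almost every poset on $n$ elements admits a canonical three-level structure, with a set $L_1$ of minimal elements of size close to $n/4$, a middle set $L_2$ of size close to $n/2$, a set $L_3$ of maximal elements of size close to $n/4$, and all covering relations running either from $L_1$ to $L_2$ or from $L_2$ to $L_3$. For such a poset the entire precedence relation is determined by two bipartite $0/1$ matrices $R_{12}\in\{0,1\}^{L_1\times L_2}$ and $R_{23}\in\{0,1\}^{L_2\times L_3}$, whose combined size $|L_1||L_2|+|L_2||L_3|\le n^2/4$ already matches the Kleitman--Rothschild lower bound up to lower-order terms.

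First, I would compute a three-level decomposition of the input, for example using the minimum and maximum chain lengths through each element, and store the partition in $\Oh(n)$ bits. For a typical poset the level sizes already come out close to $n/4, n/2, n/4$, and the only non-trivial relations are those in $R_{12}$ and $R_{23}$, since $R_{13}$ is recovered by Boolean composition. For an atypical poset the three levels may be skewed, and there may be occasional skip edges from $L_1$ directly to $L_3$ or short intra-level chains; I would store all such deviations in a separate correction structure and argue, by a counting argument against Theorem~\ref{th_poset_number}, that the correction consumes at most $\Oh(n^2\log\log n/\log n)$ additional bits.

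Second, to obtain constant-time queries I would apply a Four-Russians-style lookup. After slicing $L_2$ into chunks of $b=\Theta(\log n/\log\log n)$ bits, each row of $R_{12}$ and each column of $R_{23}$ becomes a sequence of $\Oh(n/b)$ $b$-bit words. A query within a level returns false; a query across adjacent levels is a single bit lookup in the stored matrices; and a query from $u\in L_1$ to $v\in L_3$ reduces to asking whether the row of $R_{12}$ at $u$ and the column of $R_{23}$ at $v$ share a $1$. A universal table of size $n^{\Oh(1)}$ answers each chunk-level AND-nonempty test in $\Oh(1)$, and to avoid paying $\Oh(n/b)$ per $L_1\to L_3$ query I would additionally precompute, for every pair of block indices in $L_1\times L_3$, a one-bit summary that records whether the composition is nonempty on that block, adding $\Oh(n^2/b)=\Oh(n^2\log\log n/\log n)$ bits.

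The main obstacle is making the atypical case fit within the stated budget, because the information-theoretic cost of $R_{12}$ and $R_{23}$ alone already saturates the $n^2/4$ main term, leaving no slack for an arbitrary non-Kleitman--Rothschild poset. I would handle this by partitioning the family of $n$-element posets according to a small ``deviation skeleton'' that records how the poset departs from the Kleitman--Rothschild template, and then bound both the number of skeletons and the logarithm of each class using the asymptotic form of Theorem~\ref{th_poset_number}. Assembling the $n^2/4$ matrix term, the $\Oh(n)$ partition term, the deviation correction, and the Four-Russians block summaries then yields the desired data structure of $n^2/4+\Oh(n^2\log\log n/\log n)$ bits with $\Oh(1)$ query time.
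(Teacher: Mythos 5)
This is a cited result (Munro--Nicholson), and the paper only sketches its actual proof: flatten the DAG of the poset into $\Oh(\log n)$ layers, then repeatedly extract balanced bicliques $K_{q,q}$ with $q=\Theta(\log n/\log\log n)$ from the dense bipartite graph between consecutive layers (via the Zarankiewicz-type theorem of Mubayi and Tur\'an), use the biclique structure plus transitivity to encode two potential edges in a single bit, and store the remaining sparse edges explicitly. That argument works for \emph{every} poset. Your route through the Kleitman--Rothschild three-level structure does not, and it has two fatal gaps.

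First, the Kleitman--Rothschild theorem is a statement about \emph{almost all} posets, while Theorem~\ref{th_poset_ds} must hold for an arbitrary poset. An adversarial input --- say a poset with $\Theta(n)$ levels, or one whose comparability graph is nothing like the $n/4,\,n/2,\,n/4$ template --- has a ``deviation'' that is the entire relation, so there is no small correction structure to store. Your proposed fix, a counting argument against Theorem~\ref{th_poset_number}, only bounds the \emph{number} of atypical posets; at best this licenses an enumerative encoding of the exceptional class, which gives no constant-time query mechanism and in any case does not bound the size of a structured, decodable representation of a \emph{particular} bad input. The whole point of the biclique-extraction approach is that it compresses the transitive closure of an arbitrary poset down to $n^2/4+o(n^2)$ bits without ever invoking typicality.

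Second, even in the typical case your query algorithm for $u\in L_1$, $v\in L_3$ does not run in constant time within the space budget. Deciding $u\leadsto v$ requires testing whether row $R_{12}[u]$ and column $R_{23}[v]$ share a $1$, a Boolean inner product of two $\Theta(n)$-bit vectors; the Four-Russians table answers one $b$-bit chunk at a time, so you still pay $\Theta(n/b)$ per query. The one-bit-per-block-pair summary you add records whether \emph{some} pair in an $L_1$-block $\times$ $L_3$-block rectangle is related, which is not the answer for the specific pair $(u,v)$; to answer per pair you would need the full composition matrix $R_{13}$, costing $|L_1|\,|L_3|=\Theta(n^2)$ extra bits and breaking the $n^2/4+o(n^2)$ bound. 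Storing reachability for non-adjacent levels without paying for it twice is exactly the difficulty the biclique trick is designed to solve, and your scheme does not address it.
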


\noindent The main idea in their approach is based on the so-called Zarankiewicz
problem, which asks about a lower bound on the number of edges in a bipartite graph guaranteeing that
there exists a balanced biclique ($K_{q,q}$) subgraph. Their construction first flattens the DAG to ensure
that there are not too many layers, namely $\Oh(\log n)$. Then, they iteratively extracts balanced bicliques with $q=\Theta(\log{n}/\log{\log{n}})$
as long as sufficiently many edges remain. The structure of a biclique allows them to encode two possible edges
with just a single bit instead of two. Finally, the remaining (not too many) edges are stored explicitly.

\paragraph{Our result. }
We translate the method of Munro and Nicholson to obtain an effective labeling scheme. This allows us
to improve on the simple upper bound of $n/2+\Oh(\log n)$ bits and obtain scheme of size $n/3+o(n)$.

\begin{restatable}{theorem}{mainresult}
\label{th_label_poset}
There exists a reachability labeling scheme for directed graphs on $n$ nodes of size $n/3+ o(n)$,
with the decoder working in constant time.
\end{restatable}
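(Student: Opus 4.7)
\begin{proofs}
The plan is to adapt Munro and Nicholson's biclique-based data structure (Theorem~\ref{th_poset_ds}) from the centralised setting to a distributed labeling scheme, exchanging their optimal $n^2/4$ \emph{total} bound for a per-label bound of $n/3+o(n)$.

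\textbf{Preprocessing.} I first reduce to labeling the transitive closure of a DAG in the standard way, and then apply the chain-based flattening of~\cite{MunroN16}, after which reachability becomes the disjoint union of $\Oh(\log n)$ bipartite reachability relations between groups of layers of manageable width. The remaining task is, for each such bipartite relation, to attach a few bits to each of the two endpoints so that any edge can be decoded.

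\textbf{Main construction.} Fix a topological order of the vertices. Each label has three components: a short header of $\Oh(\log n)$ bits with the vertex's rank; a direct bit vector of length $n/3 + o(n)$ carrying the reachability bits of pairs for which this label is assigned responsibility; and a list of biclique identifiers describing the collective bicliques the vertex belongs to in each bipartite reachability relation. For the latter, following~\cite{MunroN16} and the bound for the Zarankiewicz problem, I iteratively extract balanced bicliques $K_{q,q}$ with $q=\Theta(\log n/\log\log n)$, assigning each biclique a short identifier that is shared by all $2q$ participating labels together with the source/target side. Residual edges left after the biclique extraction are folded into the directly-assigned pairs of their endpoints.

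\textbf{Balancing and decoding.} The responsibility of direct bits is distributed by a rule based on the topological ranks of the two endpoints so that, in combination with the residual edges, each label ends up with at most $n/3 + o(n)$ direct bits. The decoder reads the two headers to decide which label stores the direct bit for the queried pair; if none, it scans the short biclique-identifier lists for a matching entry with compatible orientation. All lookups can be performed in constant time in the word-RAM model via standard succinct data structure tools.

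\textbf{Main obstacle.} Munro and Nicholson bound only the \emph{total} bit cost, whereas a labeling scheme demands a \emph{per-label} guarantee, and a naive distribution of biclique identifiers and residual edges can easily overload a single label. The key technical difficulty I expect is therefore designing a \emph{balanced} extraction and responsibility assignment, ensuring that no vertex accumulates more than $n/3 + o(n)$ bits across direct bits, biclique identifiers, and residual edges. I plan to handle this via an amortised analysis of a refined greedy extraction that repeatedly reassigns the responsibility of heavy vertices, combined with a careful choice of the responsibility threshold that trades off direct bits for biclique cost until the $n/3 + o(n)$ target is reached.
\end{proofs}
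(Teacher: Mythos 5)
Your high-level plan (reduce to the transitive closure of a DAG, flatten to $\Oh(\log n)$ layers, extract balanced bicliques $K_{q,q}$ with $q=\Theta(\log n/\log\log n)$, then distribute the remaining information) matches the paper's skeleton, but the proposal is missing the one idea that actually gets the bound below $n/2$, and your final paragraph essentially concedes this by deferring the ``balanced responsibility assignment'' to an unspecified amortised argument. The bicliques are tiny, so edges \emph{inside} them are negligible; the dominant cost is the edges from biclique nodes to everything else ($E_{\text{inter}}$ in the paper). Your scheme handles those only via ``direct bits'' whose responsibility is split between the two endpoints by topological rank --- but any such one-bit-per-pair assignment is exactly the warm-up scheme and cannot beat $n/2$ in the worst case, since each of the $\binom{n}{2}$ pairs charges one bit to one of its two endpoints. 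Biclique identifiers do not help here, because they only answer queries where both endpoints lie in the same biclique, a vanishing fraction of all pairs.

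The missing ingredient is the pairing trick: each extracted biclique is balanced, so its nodes can be paired $a_j\leftrightarrow b_j$ with $a_j$ in the lower layer and $b_j$ in the upper layer, and by transitive closure the adjacency of an outside node $b$ to $a_j$ and to $b_j$ is correlated --- exactly one of the two potential edges is undetermined once one knows which side of the decomposition $b$ lies on (and one extra bit recording whether $b$ is adjacent to some node of $B_i$). Hence a \emph{single} bit per pair per outside node suffices. The paper then builds an auxiliary bipartite graph whose left side consists of these pairs and applies an \emph{unbalanced} bipartite adjacency labeling (with $a\alpha+b\beta>ab$, choosing $\alpha=\lceil|V^s|/3-|A'|/2\rceil$ and $\beta=\lceil 2|A'|/3\rceil$) so that the cost of each saved bit is shared asymmetrically between the pair and the outside node; summing over the $\Oh(\log n)$ rounds telescopes to $n/3+o(n)$. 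Without this pairing-plus-unbalanced-labeling mechanism (or an equivalent one), your construction has no route to $n/3$, so the proposal as written does not establish the theorem.
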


\noindent While we largely follow the approach of Munro and Nicholson, it needs to be carefully inspected and tweaked as to distribute
the stored information among the nodes. The additional ingredient is an unbalanced adjacency labeling
scheme for bipartite graphs.
Finally, we explain how to adjust the presented scheme to achieve the \emph{average} label size
of $n/4+o(n)$ at the expense of increasing the maximum label size to $n/2+o(n)$.
Other tradeoffs are also possible.
We remark that an upper bound of $n/4$ on the average label size is optimal due to Theorem~\ref{th_poset_number},
as given a labeling scheme for a DAG and all pairs of labels, the decoder can reconstruct the entire corresponding poset.

\paragraph{Overview of our approach. }
The label of every node consists of two parts.
The encoder for our scheme operates on a decomposition of the graph into antichains called layers,
with no edges between the nodes in the same layer.
First, the layers are created based on the longest-paths decomposition.
Second, we ensure that there are only $\Oh(\log n)$ layers by removing not too many edges
and merging some of the layers into super-layers. Information about the removed edges
is distributed among the first parts of the labels, each of them consisting of $o(n)$ bits.
Third, we run the following procedure that keeps removing edges from the current graph
while maintaining its decomposition into layers.
We consider the first two layers of the current graph and decompose its nodes into
balanced biclique subgraphs and the remaining nodes.
This is the key part of the construction that, roughly speaking, allows us to compress the
graph. The nodes from the bicliques are removed from the graph, and information about
their incident edges is carefully distributed among the second parts of the labels of both
the removed and the remaining nodes. After having guaranteed that the subgraph corresponding
to the remaining nodes of the first two layers is sufficiently sparse, we merge them into one layer
and repeat the reasoning.
While the idea of first flattening and then extracting bicliques is due to Munro and Nicholson~\cite{MunroN16},
we need to inspect all the ingredients and carefully balance distributing the stored information
among the labels. As a result, we end up with labels of length $n/3+o(n)$, and
with some care the decoder can be implemented to work in constant time.

\section{Preliminaries}

We consider labeling the nodes of a directed graph for reachability queries.
A labeling scheme for a family of directed graphs on $n$ nodes, denoted $\mathcal{G}_n$, consists of an encoder and a decoder.
The encoder receives a graph $G=(V,E)\in\mathcal{G}_n$ and assigns a distinct binary string (called the label) $\nodelabel_{G}(u)$
to each node $u\in V$. We will usually omit the subscript and denote the label of $u$ simply by $\nodelabel(u)$.
The decoder, given $\nodelabel(u)$ and $\nodelabel(v)$ for some $u,v\in V$,
should return if there is a directed path from $u$ to $v$ in $G$.
We stress that the decoder is not aware of $G$ and only knows that $\ell(u)$ and $\ell(v)$
are labels of two nodes from the same graph $G\in \mathcal{G}_{n}$.
We are interested in minimising the maximum length of a label, that is $\max_{G\in \mathcal{G}_n}\max_{u\in V} |\nodelabel(u)|$,
called the size of the labeling scheme.
We are also going to consider minimising the average length of a label, defined as
$\max_{G\in \mathcal{G}_n}\sum_{u\in V} |\nodelabel(u)|/n$.
When analysing the decoding time, we assume the standard Word RAM model with words of length $\Theta(\log n)$.
That is, both labels are given as arrays, with each entry storing $\Theta(\log n)$ consecutive bits of
the label, and the decoder can access any of these entries in constant time. To make our scheme
more relevant for possible applications, we insist that the decoder is uniform, that is, actually works
for any value of $n$ (otherwise the set of inputs is possibly very large but finite, and the decoding procedure
could simply access a preprocessed table, which is clearly not too practical).

$a \leadsto b$ denotes that there is a directed path (possibly with zero length) from $a$ to $b$,
and in such case we say that $a$ can reach $b$, or that $b$ is greater than $a$.

We focus on the class of directed acyclic graphs on $n$ nodes, denoted $\mathcal{DAG}_{n}$.
A labeling scheme for $\mathcal{G}_{n}$ can be obtained from our construction for $\mathcal{DAG}_{n}$
using the following lemma.

\begin{lemma}\label{lem_digraphs_to_dags}
Assume that there is a reachability labeling scheme for $\mathcal{DAG}_{n}$ of size $f(n)$
and average size $g(n)$, with the decoder working in constant time.
Then there is also a reachability labeling scheme for $\mathcal{G}_{n}$ of size $f(n)+\Oh(\log{n})$
and average size $g(n)+\Oh(\log{n})$, with the decoder working in constant time.
\end{lemma}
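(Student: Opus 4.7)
\begin{proofs}
The plan is to reduce reachability in general digraphs to reachability in DAGs by exploiting strongly connected components. Given $G=(V,E)\in\mathcal{G}_n$, the encoder first computes the SCC partition $V = C_1 \sqcup \dots \sqcup C_m$, then builds an auxiliary DAG $H$ on the \emph{same} vertex set $V$ that captures inter-SCC reachability, and finally applies the assumed $\mathcal{DAG}_n$ scheme to $H$. The label of $u$ is the concatenation of a $\lceil\log n\rceil$-bit identifier of its SCC with the DAG label $\nodelabel_H(u)$. The decoder, given two labels, first compares the SCC identifiers: if they match it returns true (since nodes in the same SCC are mutually reachable), and otherwise it invokes the DAG decoder on the $\nodelabel_H$-portions.

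The key design choice is the definition of $H$. The naive candidate is the condensation of $G$, a DAG on $m \le n$ nodes, but this is unsuitable for the average-size bound: $g(m)$ bounds the average \emph{per SCC}, so if one large SCC happens to receive a long label the per-vertex average for $G$ degrades all the way to the worst case $f(n)$ rather than $g(n)$. To sidestep this, I would take $E_H = \{(u,v) : C(u) \neq C(v) \text{ and } u \leadsto v \text{ in } G\}$, where $C(u)$ denotes the SCC of $u$. Any cycle in $H$ would place its vertices in a common SCC, so $H$ is acyclic; and for $u,v$ in distinct SCCs one checks that $u \leadsto v$ in $H$ iff $u \leadsto v$ in $G$: the forward direction is immediate from the definition of $E_H$, and the reverse follows because every $H$-edge witnesses $G$-reachability and $G$-reachability composes along a path.

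The remaining analysis is routine. Each label has length $|\nodelabel_H(u)| + \Oh(\log n) \le f(n) + \Oh(\log n)$, and because $H$ has exactly $n$ vertices, the per-vertex average over $V$ is $\sum_{u\in V}|\nodelabel_H(u)|/n + \Oh(\log n) \le g(n) + \Oh(\log n)$; distinctness of the combined labels is inherited from $\nodelabel_H$. The decoder performs an $\Oh(\log n)$-bit equality check followed by at most one call to the DAG decoder, so it runs in constant time in the Word RAM model, and uniformity of the decoder follows from uniformity of the assumed DAG decoder.

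I expect the only real obstacle to be resisting the tempting choice of taking $H$ to be the condensation of $G$; as explained above, that choice fails to deliver the promised average-size bound. The remedy --- putting $H$ on the full vertex set $V$ at the cost of including shortcut edges --- is exactly what aligns the averaging denominator with the $n$ appearing in $g(n)$ and keeps the overhead down to a clean additive $\Oh(\log n)$.
\end{proofs}
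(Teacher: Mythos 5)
Your proof is correct and takes essentially the same approach as the paper: both construct an auxiliary DAG on the \emph{full} vertex set $V$ (precisely so the average-size bound carries over) and prepend an $\Oh(\log n)$-bit SCC identifier, with the decoder short-circuiting the same-SCC case. The only (immaterial) difference is the edge set of the auxiliary DAG --- the paper threads each SCC into a path and routes each inter-SCC edge through the path endpoints, whereas you take all inter-SCC reachable pairs directly as edges; both yield the same reachability relation between nodes of distinct SCCs.
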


\begin{proof}
We explain how to obtain a labeling of the given directed graph $G=(V,E)$ by constructing
a DAG $G'=(V,E')$, using the assumed scheme to label its nodes, and prepending some extra
information to the label of every node.

$G'$ is constructed by identifying the strongly connected components (SCCs) of $G$.
Let $\{v_{1},v_{2},\ldots,v_{k}\}$ be the nodes in the same SCC. We add edges $(v_{i},v_{i+1})$, for every $i=1,2,\ldots,k-1$, to $E'$.
Then, for every edge $(u,v)\in E$ such that $u$ and $v$ belong to different SCCs consisting
of nodes $\{u_{1},u_{2},\ldots,u_{k}\}$ and $\{v_{1},v_{2},\ldots,v_{\ell}\}$, respectively, 
we add the edge $(u_{k},v_{1})$ to $E'$. It is easy to verify that, for any $u$ and $v$ belonging
to different SCCs, $u \leadsto v$ in $G$ if and only if $u \leadsto v$ in $G'$. We run our
encoder on $G'$ to obtain the label $\nodelabel_{G'}(u)$ for every $u\in V$. Then, to obtain
$\ell_{G}(u)$ we simply prepend the identifier of SCC of $u$, consisting of $\Oh(\log n)$ bits.
This allows the decoder to correctly check if $u\leadsto v$ in $G$
by first checking if they both belong to the same SCC, and if not inspecting $\nodelabel_{G'}(u)$
and $\nodelabel_{G'}(v)$.
\end{proof}

In the remaining part of the paper, we assume that the input graph $G=(V,E)$ is acyclic, and $G_c=(V,E_c)$ denotes its
transitive closure. By definition, $u \leadsto b$ in $G$ if and only if $(u,v)\in E_{c}$.
Even though the graph is directed, we will also say that such $a$ and $b$ are adjacent.

To make the decoder computationally efficient, we need the following theorem of Hagerup, Miltersen and Pagh~\cite{HagerupMP01}:

\begin{theorem}\label{th_dict}
For a given set $S \subseteq \{0,1,...,n\}$ there is a dictionary of size $\Oh(|S|)$, allowing to answer queries
$x \in S$ in constant time and constructible in time $\Oh(|S|\log{|S|})$, assuming word size $\Theta(\log{n})$.
\end{theorem}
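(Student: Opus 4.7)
The plan is to build a deterministic two-level perfect hash table in the style of Fredman--Koml\'os--Szemer\'edi (FKS), where the main challenge is derandomizing the choice of hash functions within the $\Oh(|S|\log|S|)$ construction budget.

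First I would lay down the FKS skeleton. Pick a first-level hash $h$ that partitions $S$ into $m = \Theta(|S|)$ buckets $B_0, B_1, \ldots, B_{m-1}$ with the collision-sum property $\sum_i |B_i|^2 = \Oh(|S|)$. For each bucket $B_i$, install a second-level table of size $\Oh(|B_i|^2)$ together with a perfect hash function $h_i$ that is injective on $B_i$. Under these invariants the total space is $\Oh(|S|)$ words, and a membership query on $x$ reduces to computing $h(x)$, then $h_{h(x)}(x)$, and comparing with the stored candidate key at that slot, all in constant time on the Word RAM.

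The substantive difficulty is producing $h$ and the $h_i$'s deterministically within the prescribed time. A random multiplicative hash from a universal family has the required collision properties in expectation, but we cannot toss coins. The remedy is to work with a hash-function family indexed by $\Oh(\log n)$-bit parameters from which a good candidate can be found by guided search: at each step the current parameter is refined (``split'') so that the number of keys causing excess collisions strictly decreases. The Word RAM lets us pack many keys into a single word and test a candidate against many keys in constant time, which is how the total search cost is kept to $\Oh(|S|\log|S|)$ rather than the naive $\Oh(|S|^2)$ or worse. Essentially the same technique is applied twice: once at the top level to certify $\sum_i|B_i|^2=\Oh(|S|)$, and once inside each bucket $B_i$ to certify injectivity in a table of size $\Oh(|B_i|^2)$.

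A preliminary ingredient is universe reduction: since we may have $|S|\ll n$, I would first apply an injective pre-hash $S\to\{0,1,\ldots,|S|^{\Oh(1)}-1\}$, again found by the same deterministic-search technique. This ensures that, from the second stage onward, keys occupy only $\Oh(\log|S|)$ bits, which is what makes the word-packing arguments in the previous paragraph effective. The main obstacle, as I see it, is bookkeeping the splitting search so that neither the number of candidates enumerated nor the cost of verifying a single candidate blows up the $\Oh(|S|\log|S|)$ budget; once that is in place, query correctness and $\Oh(1)$ query time follow from the standard FKS analysis and the fact that every hash evaluation is a constant number of multiplications, shifts, and table lookups.
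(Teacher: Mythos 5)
This statement is not proved in the paper at all: it is quoted as a black-box result of Hagerup, Miltersen and Pagh (the paper's Theorem~\ref{th_dict} is simply a citation of their deterministic-dictionary construction), so there is no internal argument to compare yours against, and any self-contained proof would in effect be reproving that paper.

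Viewed as such a self-contained proof, your sketch has the right architecture --- universe reduction, a two-level FKS-style table with $\sum_i |B_i|^2 = \Oh(|S|)$, and constant-time queries by two hash evaluations --- but it leaves the actual content of the theorem unestablished. Everything in your first and second paragraphs is the standard (randomized) FKS analysis; the theorem's whole difficulty is the deterministic construction in $\Oh(|S|\log|S|)$ time. Your treatment of that point is a gesture: ``a guided search in which the parameter is refined so that the number of keys causing excess collisions strictly decreases,'' with word-packing invoked to keep the cost down. You neither exhibit the family being searched, nor prove that such a refinement step exists and makes measurable progress, nor account for how a candidate is tested against $|S|$ keys in time that sums to $\Oh(|S|\log|S|)$ over all refinement steps and all buckets --- the naive derandomization by conditional expectations over a universal family does not meet this bound, which is precisely why the cited result required new machinery (in Hagerup--Miltersen--Pagh this involves, among other things, an error-correcting-code-based universe reduction and a carefully engineered deterministic hash selection). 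You acknowledge this yourself (``the main obstacle, as I see it, is bookkeeping the splitting search''), which is an admission that the crux is missing. So either cite the result, as the paper does, or the derandomization step must be carried out in full; as written, the proposal is an outline of the known approach rather than a proof.
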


\section{Warm-up and bipartite graphs}

We first present a very simple preliminary scheme of size $n/2+\Oh(\log n)$. We note that the underlying idea
was already implicit in the work of Moon~\cite{UniversalGraphs}.

\begin{theorem}\label{simple}
There exists a reachability labeling scheme for $\mathcal{DAG}_n$ of size $n/2+ \Oh(\log n)$,
with the decoder working in constant time.
\end{theorem}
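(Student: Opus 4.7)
\begin{proofs}
The plan is to topologically sort the nodes of the transitive closure $G_c$ and then distribute the $\binom{n}{2}$ relevant reachability bits across the labels using a cyclic responsibility rule, à la Moon.

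First, we compute $G_c$ and fix any topological order $v_1,v_2,\ldots,v_n$, so that $v_i \leadsto v_j$ implies $i\le j$. The reachability relation is then encoded by a strictly upper triangular $0/1$ matrix of $\binom{n}{2}$ bits, which we must split among the $n$ labels while giving each label at most $\lfloor n/2\rfloor + \Oh(\log n)$ bits.

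The label $\nodelabel(v_i)$ consists of two pieces: the index $i$ encoded in $\lceil \log_2 n\rceil$ bits, and a bit vector $b_i$ of length exactly $\lfloor n/2 \rfloor$. For $k\in\{1,2,\ldots,\lfloor n/2\rfloor\}$, let $\pi(i,k) = ((i+k-1)\bmod n)+1$, and set $b_i[k]$ to be the reachability bit of the pair $\{v_i,v_{\pi(i,k)}\}$ interpreted in the direction forced by the topological order (that is, $1$ iff the topologically smaller of the two reaches the topologically larger). Thus every node stores information about the $\lfloor n/2\rfloor$ cyclically closest nodes on its right, and each pair $(v_i,v_j)$ with $i<j$ is recorded either in $b_i$ (if $j-i\le \lfloor n/2\rfloor$) or in $b_j$ (if $j-i > \lfloor n/2\rfloor$, equivalently $n-(j-i) \le \lceil n/2\rceil - 1 \le \lfloor n/2\rfloor$).

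The decoder, given $\nodelabel(u)$ with index $i$ and $\nodelabel(v)$ with index $j$, first returns yes if $i=j$ and no if $i>j$. Otherwise it sets $d=j-i$; if $d\le \lfloor n/2\rfloor$ it returns $b_i[d]$, and otherwise it returns $b_j[n-d]$. Correctness is immediate from the definition, random access to a single bit of a label takes constant time on the Word RAM, so this completes the scheme. There is no real obstacle here: the only point to check is the accounting that ensures the label length is $\lfloor n/2\rfloor + \Oh(\log n)$ while still covering every pair, which follows from the simple case analysis on the parity of $n$ showing that the two conditions $j-i\le \lfloor n/2\rfloor$ and $n-(j-i)\le \lfloor n/2\rfloor$ together cover all ordered pairs with $i<j$.
\end{proofs}
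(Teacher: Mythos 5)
Your proposal is correct and follows essentially the same approach as the paper's own proof: a topological numbering plus a cyclic rule assigning each of the $\binom{n}{2}$ pairs to the node at cyclic distance at most $\lfloor n/2\rfloor$, with the decoder choosing which of the two tables to consult based on the index difference. The only cosmetic difference is indexing conventions (your bits record reachability in the direction forced by the order, the paper records comparability, which is equivalent here).
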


\begin{proof}
Consider $G=(V,E)\in \mathcal{DAG}_n$ and fix an arbitrary topological numbering of its nodes $I(\cdot)$, starting from $0$.
For any $u,v\in V$, $I(u) < I(v)$ implies that there is no path from $v$ to $u$ in $G$.
The encoder for every node $u \in V$ composes $\nodelabel(u)$ out of an encoding of $I(u)$ consisting of $\log n$
bits and a $\floor{n/2}$-bit table $B_u[\cdot]$.
For $j= \fru{0}{\floor{n/2}-1}$, the encoder sets $B_u[j]=1$ iff the nodes $u$ and $I^{-1}((I(u)+j+1) \bmod n)$ are comparable,
that is, $u \leadsto I^{-1}((I(u)+j+1) \bmod n)$ or $I^{-1}((I(u)+j+1) \bmod n) \leadsto u$.
The size of this labeling scheme is $n/2+ \Oh(\log n)$.
As for the decoder, it first extracts $I(u)$ and $I(v)$ from $\ell(u)$ and $\ell(v)$.
If $I(u)=I(v)$ then $u=v$, if $I(u)> I(v)$ then there is no path from $u$ to $v$.
We are left with the case $I(u)<I(v)$.
If $I(v)-I(u)\leq \floor{n/2}$ then the bit $B_u[I(v)-I(u)-1]$ determines whether $v$ is reachable from $u$.
Otherwise the bit $B_v[n+I(u)-I(v)-1]$ gives us this information.
\end{proof}

A technical ingredient in our solution is an adjacency labeling scheme for undirected bipartite graphs,
or equivalently reachability queries for directed graphs consisting of two layers, with the edges directed from the first
layer to the second layer. The bounds from the following lemma can be also inferred from the spreading
lemma used by Alstrup, Kaplan, Thorup and Zwick~\cite{AdjacencyGraphs} by setting all $\ell_{i}$ to be
equal. In the appendix we provide a direct proof that avoids their round-robin procedure and allows us
to provide a detailed description of the decoder.

\begin{restatable}{theorem}{lembipartite}
\label{lem_bipartite_1}
Set $a,b$ and consider a family $\mathcal{K}_{a,b}$ of bipartite graphs with two layers $A$, $B$ with $a$ and $b$ nodes correspondingly.
For any natural $\alpha$, $\beta$ satisfying $a \alpha + b \beta >  ab$ there exists an adjacency labeling scheme of size $\alpha + \Oh(\log N)$
for nodes from $A$ and size $\beta + \Oh(\log N)$ for nodes from $B$, where $N=\max\{\alpha,\beta,a,b\}$,
with the decoder working in constant time.
\end{restatable}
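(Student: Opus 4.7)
The plan is to construct a ``covering'' scheme in which, for every pair $(a_i, b_j) \in A \times B$, at least one of the two endpoints stores the bit indicating their adjacency. Concretely, I would associate with every $a_i \in A$ a subset $S_i \subseteq B$ of size $\alpha$, and with every $b_j \in B$ a subset $T_j \subseteq A$ of size $\beta$, such that for every $i, j$ one has $b_j \in S_i$ or $a_i \in T_j$. The label of $a_i$ is then its index (using $\Oh(\log N)$ bits), together with the parameters $\alpha, \beta, a, b$ to make the decoder uniform, followed by an $\alpha$-bit vector storing adjacency to each element of $S_i$ in a fixed canonical order; analogously, $b_j$'s label carries $\Oh(\log N)$ bits of bookkeeping plus a $\beta$-bit adjacency vector for $T_j$. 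Given two labels, the decoder determines in constant time which of $S_i$, $T_j$ contains the counterpart node and then looks up the corresponding bit.

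For the actual covering I would take the most naive cyclic arrangement: $S_i$ is the cyclic interval of length $\alpha$ on $\{0, \ldots, b-1\}$ starting at $\lfloor i \cdot b / a \rfloor$. Testing $b_j \in S_i$ then reduces to checking whether $(j - \lfloor i \cdot b / a \rfloor) \bmod b < \alpha$, and in that case the rank of $j$ in $S_i$ is precisely that residue. For $T_j$, the key observation is that, because $\lfloor i \cdot b / a \rfloor$ is monotone in $i$, the ``uncovered'' set $U_j = \{i : b_j \notin S_i\}$ is a cyclic interval in $\{0, \ldots, a-1\}$, and a direct counting argument using $a\alpha + b\beta > ab$ shows that $|U_j| \le \beta$. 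I then define $T_j$ to be any cyclic interval of length exactly $\beta$ extending $U_j$, whose starting position has a closed-form expression in $j$. Thus membership $a_i \in T_j$ and the corresponding rank reduce again to a single modular-arithmetic computation.

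The main obstacle is the integer realization of the covering. In the continuous picture, placing $A$ at positions $i/a$ and $B$ at $j/b$ on the unit circle and assigning each $a_i$ an arc of length $\alpha/b$ gives exact coverage precisely when $\alpha/b + \beta/a \ge 1$; passing to integer-indexed cyclic intervals, however, introduces rounding at both endpoints of every arc. The strict inequality $a\alpha + b\beta > ab$ provides exactly the slack needed to absorb these rounding errors and to guarantee that $|U_j| \le \beta$ uniformly in $j$. This is essentially the only delicate step; after it is established, both the encoder (evaluate two floors and two moduli) and the decoder (evaluate two moduli, compare with $\alpha$, read one bit) take constant time, and the label sizes are exactly $\alpha + \Oh(\log N)$ and $\beta + \Oh(\log N)$ as claimed.
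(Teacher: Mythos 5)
Your proposal is correct and follows essentially the same route as the paper's proof in Appendix~\ref{sec:bipartite_proof}: both assign to each node of $A$ a cyclic interval of $\alpha$ positions in $B$ starting near $ib/a$ and to each node of $B$ a cyclic interval of $\beta$ positions in $A$ starting near $ja/b$, and both reduce membership and rank queries to a single modular computation. The only cosmetic differences are that the paper uses ceilings rather than floors for the start positions and verifies the covering property directly via the quantity $w = ai_b - bi_a$ (showing $i<\alpha$ or $j<\beta$), whereas you derive the second family of intervals as a closed-form extension of the uncovered set; your counting bound $|U_j|\le\beta$ does follow from the strict inequality $a\alpha+b\beta>ab$ exactly as you claim.
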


We remark that it is not difficult to see that such a scheme exists, by applying Hall's marriage theorem
on the following auxiliary bipartite graph $G^\prime=(A^\prime,B^\prime;E^\prime)$. We set $A^\prime=A \times B$,
$B^\prime=(A \times \fru{0}{\alpha-1})\cup(B \times \fru{0}{\beta-1})$, and
connect $(u_a,u_b) \in A^\prime$ to every node of the form $(u_a,i),i \in \fru{0}{\alpha-1}$
and $(u_b,j),i \in \fru{0}{\beta-1}$, creating $\alpha+\beta$ edges in total for every node from $A^\prime$.
When $a \alpha + b \beta \geq  ab$ holds, this graph can be verified to admit a perfect matching by
Hall's marriage theorem. Such a perfect matching forms an injective function from the edges of the original
graph to the bits of the labels of desired size.
However, we do not want the decoder to store the perfect matching, or to compute it upon a query,
so we need an explicit construction.

\newpage
\section{DAG flattening} \label{sec_flatten}
We are given a transitively closed directed acyclic graph $G_c=(V,E_c)$.
Let $d[v]$ be the length of the longest directed path ending in node $v$, and
$U_i=\{v \in V: d[v]=i-1\}$.
$\mathcal{U}= (U_1,U_2,\ldots, U_k)$ is a partition of nodes of the graph into antichains called {\em layers}.
Clearly, there are no edges between the nodes in the same $U_{i}$, and by enumerating
the nodes of $U_{1},U_{2},\ldots,U_{k}$ in this order we obtain a topological sorting of $G_{c}$.

Instead of iteratively merging pairs of adjacent layers, as done by Munro and Nicholson~\cite{MunroN16},
we directly describe which layers should be merged. Let $\gamma$ be a parameter. We call a 
layer $U_{i}$ {\em thick} if $|U_{i}| > n/\gamma$, and {\em thin} otherwise. We merge intervals
of consecutive layers to create {\em super-layers}. Each thick layer forms its own separate
super-layer, whereas consecutive thin layers are glued into a single super-layer, up to the point
where its size exceeds $n/\gamma$ or when thick layer is encountered. By construction each
super-layer has one of the following types:
\begin{description}
 \item[type 1] single thick layer,
 \item[type 2] consecutive thin layers with $(n/\gamma,2n/\gamma]$ nodes in total,
 \item[type 3] consecutive thin layers with $ \leq n/\gamma$ nodes in total.
\end{description}
Furthermore, because each super-layer of type 3 is either followed by a type 1 super-layer, or is
the very last super-layer, there are $\Oh(\gamma)$ super-layers.

After having generated the set of super-layers $\mathcal{S}$, we partition the edges as follows:
\begin{equation*}
E_1= \bigcup_{S \in \mathcal{S}}(S \times S)\cap E_c, \quad E_2 = E_c\setminus E_1.
\end{equation*}
We will show how to assign labels $\nodelabel_1(\cdot)$ that allow checking if $(u,v)\in E_{1}$
given $\nodelabel_{1}(u)$ and $\nodelabel_{1}(v)$. 

\begin{lemma}\label{lem_label_flatten}
There is an assignment of labels $\nodelabel_{1}(\cdot)$ consisting of $\Oh(\log n+ n/\gamma)$ bits
that allows checking in constant time if $(u,v)\in E_{1}$, given $\nodelabel_{1}(u)$ and $\nodelabel_{1}(v)$.
\end{lemma}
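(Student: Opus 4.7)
\begin{proofs}
The plan is to encode, within the label of each node, (i) an identifier of its super-layer, (ii) its position within that super-layer, and (iii) a bitmap describing its reachability relation to the other nodes of the same super-layer. Since by construction every super-layer contains at most $2n/\gamma$ nodes, each such bitmap fits in $2n/\gamma$ bits, and the remaining pieces of information require only $\Oh(\log n)$ bits each, yielding a total label length of $\Oh(\log n + n/\gamma)$ as required.

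Concretely, first I would number the super-layers $S_1,S_2,\ldots,S_t$ with $t=\Oh(\gamma)$ in the order in which they appear in the layered decomposition, and fix an arbitrary ordering of the nodes inside each super-layer, assigning every node a local index in $\fru{0}{|S|-1}$. For a node $u$ belonging to super-layer $S=S_j$, I would then set $\nodelabel_1(u)$ to consist of three fields: the super-layer index $j$, the local index of $u$ within $S$, and a bit-vector $R_u$ of length $|S|$ where $R_u[i]=1$ iff $(u,w)\in E_c$ for the node $w\in S$ with local index $i$. Since $|S|\leq 2n/\gamma$ by the way super-layers of types 1, 2 and 3 are constructed, and since $t=\Oh(\gamma)$ so the super-layer index fits in $\Oh(\log n)$ bits, the total length of $\nodelabel_1(u)$ is $\Oh(\log n + n/\gamma)$.

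The decoder, on input $\nodelabel_1(u)$ and $\nodelabel_1(v)$, reads off the super-layer indices of $u$ and $v$; if they differ, it reports that $(u,v)\notin E_1$, since $E_1$ only contains intra-super-layer edges of $E_c$. Otherwise $u$ and $v$ lie in a common super-layer $S$, and the decoder reads the local index $i$ of $v$ from $\nodelabel_1(v)$ and returns the value $R_u[i]$ stored in $\nodelabel_1(u)$; by definition, this bit equals $1$ iff $(u,v)\in E_c$, which in this case is equivalent to $(u,v)\in E_1$. All of the above operations reduce to extracting a constant number of fixed-position subwords and reading a single bit at a position that can be computed by one arithmetic operation, so they take constant time in the Word RAM model.

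I do not anticipate any real obstacle here: the statement asks only for a bound of $\Oh(\log n + n/\gamma)$, and the size cap $2n/\gamma$ on super-layers provided by the flattening step gives exactly this slack. The only point requiring minor care is packaging the three fields of the label at fixed offsets (depending only on $n$ and $\gamma$) so that the decoder is uniform and truly constant-time, but this is a routine Word RAM manipulation.
\end{proofs}
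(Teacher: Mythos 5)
Your overall plan matches the paper's: store a super-layer identifier, a position, and a bitmap of intra-super-layer adjacencies. However, there is a genuine gap in your size analysis. You claim that ``by construction every super-layer contains at most $2n/\gamma$ nodes,'' but this is only true for super-layers of types 2 and 3. A type~1 super-layer is a \emph{single thick layer}, i.e.\ a layer with \emph{more} than $n/\gamma$ nodes and no upper bound on its size --- it could contain up to $n$ nodes. For a node $u$ in such a super-layer $S$, your bit-vector $R_u$ has length $|S|$, which can be as large as $n$, so your labels do not satisfy the claimed $\Oh(\log n + n/\gamma)$ bound.

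The fix is exactly the observation the paper makes: a type~1 super-layer consists of a single layer $U_i$, and each layer is an antichain, so there are no edges of $E_1$ inside it at all. Hence for nodes of thick super-layers one stores a single flag bit (the paper's $\textsf{thick}_i$) and \emph{omits} the bitmap entirely; the decoder, upon seeing that the common super-layer is thick, immediately answers ``not in $E_1$.'' Only for super-layers built from thin layers (types 2 and 3), whose size is indeed at most $2n/\gamma$, does one store the adjacency bitmap. With this case distinction added, your argument goes through and coincides with the paper's proof (the paper uses positions in a global topological order where you use local indices, which is immaterial).
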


\begin{proof}
Let $I(\cdot)$ be the topological ordering of $G_c$ obtained from $\mathcal{U}=(U_{1},U_{2},\ldots,U_{k})$.
We describe how to obtain $\nodelabel_{1}(u)$ for $u\in S_i \in \mathcal{S}$.
As each super-layer consists of consecutive layers, nodes of $S_i$ create an interval  $[\textsf{beg}_i,\textsf{end}_i)$
in the topological ordering.
Let $\textsf{thick}_i$ be a Boolean value denoting whether $S_i$ is type $1$ super-layer.
$\nodelabel_1(u)$ consists of numbers $I(u)$, $i$, $\textsf{beg}_i$, $\textsf{end}_i$ and bit $\textsf{thick}_i$.
If $\textsf{thick}_i=0$ the encoder appends a bit-table $C_u[\cdot]$ of length $\textsf{end}_i-\textsf{beg}_i+1 \leq 2n/\gamma+1$
to $\nodelabel_1(u)$, where $C_u[j]=1$ iff $(u,I^{-1}(\text{beg}_i + j) )\in E_1$.
If $\textsf{thick}_i=1$, then there are no edges inside this super-layer and the encoder does not append anything to
$\nodelabel_{1}(u)$.
Observe that each label consists of  $\Oh(\log n+ n/\gamma)$ bits, and the decoder
is straightforward to implement in constant time.
\end{proof}

After removing $E_{1}$ from $G_{c}$, we obtain a new graph $G'_{c}=(V,E_{2})$ consisting of only $\Oh(\gamma)$
layers, as each super-layer now becomes a layer (however, the decomposition into layers is now not based on considering
the longest paths).
We set $\gamma=\log n$, this makes the labels obtained from Lemma~\ref{lem_label_flatten} consist of only
$o(n)$ bits, while the new graph $G'_{c}$ consists of $\Oh(\log n)$ layers.
It is easy to see that $G_{c}$ is still transitively closed.

\section{Flat DAG labeling}
\label{sec:flat}
We are now given a transitively closed directed graph $G'_{c}=(V,E_2)$ with $\Oh(\log n)$ layers $U_1,U_2, \ldots, U_k$.
Our goal is to find an adjacency labeling scheme of such graphs.
As in~\cite{MunroN16} we will find and remove bicliques in consecutive layers, with the main tool being the following theorem
by Mubayi and Turan.

\begin{theorem}[\cite{MubayiT10}]\label{th_biclique}
There exists a constant $c_{\min}$, such that every undirected graph $G=(V,E)$ with $|V|\geq c_{\min}$ and
$|E| \geq 8|V|^{3/2}$ contains a biclique $K_{q,q}$, where $q=\Theta(\log|V|/\log(|V|^2/|E|))$.
This biclique can be found in $\Oh(|E|)$ time.
\end{theorem}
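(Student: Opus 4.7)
\begin{proofs}
The plan is to follow the classical Kővári–Sós–Turán double-counting template, adapted to locate a $K_{q,q}$ rather than merely to bound the extremal number. Write $n=|V|$ and $m=|E|$, and let $\bar{d}=2m/n\ge 16\sqrt{n}$ denote the average degree. The key quantity is the number of incidences between vertices and $q$-element subsets of their neighborhoods, namely $N_q := \sum_{v\in V}\binom{d(v)}{q}$.

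First, by convexity of $x\mapsto \binom{x}{q}$ and Jensen's inequality, $N_q\ge n\binom{\bar{d}}{q}$. On the other hand, each $q$-subset $S\subseteq V$ is counted by $N_q$ exactly as many times as it has common neighbors, i.e.\ once per $v$ with $S\subseteq N(v)$. By pigeonhole over the $\binom{n}{q}$ candidate subsets, some $S$ must have at least $N_q/\binom{n}{q}\ge n\binom{\bar{d}}{q}/\binom{n}{q}$ common neighbors. Using the standard estimate $\binom{\bar{d}}{q}/\binom{n}{q}\ge \bigl((\bar{d}-q)/n\bigr)^{q}$ and the hypothesis $\bar{d}\ge 16\sqrt{n}$ (which makes $q$ small compared to $\bar d$), this ratio is at least $n\cdot(\bar{d}/(2n))^{q}$. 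Setting $q=c\log n/\log(n^{2}/m)$ for a sufficiently small constant $c>0$ and solving, the number of common neighbors of $S$ is at least $2q$. Any $q$ of them disjoint from $S$ (discarding the at most $q$ members of $S$) together with $S$ itself form the desired biclique $K_{q,q}$. Matching the lower bound up to constants yields $q=\Theta(\log|V|/\log(|V|^{2}/|E|))$.

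For the algorithmic part the naive implementation is too slow, because enumerating all $q$-subsets of each $N(v)$ costs $\sum_v\binom{d(v)}{q}$, which may be much larger than $|E|$. The approach I would take is to first prune the graph to its subgraph of vertices of degree at least $\bar{d}/2$; the counting argument still applies to this subgraph because we have only thrown away at most $m/2$ edges. Then I would iterate over vertices $v$ in a single pass and, for each $v$, sample or canonically select one $q$-subset of $N(v)$ (for instance, the $q$ neighbors of smallest identifier) and insert it into a hash map keyed by $q$-tuples. Whenever a key acquires $q$ occurrences, we have found our biclique and stop. A straightforward charging shows that until we stop, the total work is proportional to the number of edges inspected, which is $\Oh(|E|)$, since each vertex contributes $\Oh(q)=\Oh(\log n)$ hash work and $\Oh(d(v))$ edge-reading work. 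Correctness of the heuristic selection follows by re-running the counting argument on the sub-collection of canonically chosen $q$-subsets, which one can verify also produces a repeated subset of the right multiplicity by a mild adjustment of the constants.

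The main obstacle is reconciling the pigeonhole existence argument, which considers \emph{all} $q$-subsets of each neighborhood, with the requirement of $\Oh(|E|)$ running time, which only allows us to look at a near-linear amount of information. Getting both aspects right forces the canonical-subset trick (or an equivalent two-level sampling) and a careful accounting to ensure the density threshold $8|V|^{3/2}$ is still sufficient after the loss of factors incurred by restricting to canonical subsets. A secondary, more cosmetic, hurdle is to make sure the two sides of the biclique are disjoint, which is handled by the observation that we have at least $2q$ common neighbors and can discard the at most $q$ that lie in $S$.
\end{proofs}
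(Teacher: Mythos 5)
A preliminary remark: the paper does not prove this statement at all --- it is imported verbatim from Mubayi and Tur\'an \cite{MubayiT10} and only used as a black box --- so your attempt has to be measured against their argument. The first half of your proposal, the existence of the biclique via $N_q=\sum_v\binom{d(v)}{q}\ge n\binom{\bar d}{q}$ and pigeonhole over $\binom{n}{q}$ candidate sets, is the standard K\H{o}v\'ari--S\'os--Tur\'an double counting and is essentially correct (the arithmetic gives some $S$ with about $n\cdot(m/n^2)^q=n^{1-c}$ common neighbors, comfortably more than $2q$, and your disjointness fix is fine, if unnecessary here since the theorem is only ever applied to bipartite graphs).

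The genuine gap is in the algorithmic part, which is precisely the point of \cite{MubayiT10}. The pigeonhole is driven by the fact that the multiset of pairs $(v,S)$ with $S\subseteq N(v)$, $|S|=q$ has size $\sum_v\binom{d(v)}{q}\ge 2q\binom{n}{q}$; a repeated $S$ is forced only because this multiset is larger than the number of keys. Once each vertex contributes a \emph{single} canonical $q$-subset, the multiset has size exactly $n$, which is vastly smaller than $\binom{n}{q}$, and no collision is forced: nothing prevents the $q$ lowest-identifier neighbors of distinct vertices from being pairwise distinct sets, so your hash map may simply never see a key twice. ``Re-running the counting argument on the sub-collection of canonically chosen subsets'' cannot be rescued by adjusting constants --- the inequality you would need is off by a factor on the order of $\binom{\bar d}{q}$, not a constant --- and the same objection applies to sampling one subset per vertex. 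This is exactly the obstacle Mubayi and Tur\'an address: the existence proof implicitly examines superpolynomially many subsets, and their contribution is a different algorithmic idea, building the left side $\{v_1,\dots,v_q\}$ greedily one vertex at a time while maintaining a potential function (a conditional-expectation-style invariant guaranteeing that the partial choice still extends to a $q$-set with many common neighbors), rather than detecting a repeated canonical subset. A smaller slip: deleting all vertices of degree below $\bar d/2=m/n$ can destroy up to $n\cdot(m/n)=m$ edges, i.e., potentially everything, not ``at most $m/2$''; the usual cleanup prunes below $\bar d/4$.
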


We remark that the above theorem will be applied only on bipartite graphs that are much denser that the
required threshold of $8|V|^{3/2}$. Additionally, any $q=\omega(1)$ would suffice for our approach.
However, this does not seem to allow for a simpler proof.

\subsection{Biclique decomposition}
We are given undirected bipartite graph $G_{bip}=(A,B;E)$, and our goal in this subsection is to partition it into bicliques.
Later, this procedure will be iteratively applied on two consecutive layers of the initial graph.
Let $G_1=G_{bip}$. As long as the current graph satisfies the conditions of Theorem~\ref{th_biclique},
we apply it to extract a biclique (removing its nodes and edges), and repeat.
See Algorithm~\ref{alg_biclique} for a detailed description of the procedure.

\begin{algorithm} 
\scriptsize
\caption{Decomposing an undirected bipartite graph into bicliques}
\label{alg_biclique}
\begin{algorithmic}[1]
\Procedure{FindBicliques}{$G_{bip}=(A,B;E)$} 
\State $G_1 \gets G_{bip}$.
\State $n \gets |A|+|B|$
\State $\mathcal{K} \gets \emptyset$  \Comment{family of bicliques}
\State $i \gets 1$ \Comment{current graph index}
\While { $G_i=(A^i,B^i;E^i)$ satisfies $w=|A^i|+|B^i|> \max(c_{min},n^{3/4})$ and $|E^i|> w^2/(\log^6 w)$} \label{alin_biclique_conditions}
\LeftComment{For large enough $n$, $|E^i|> w^2/(\log^6 w)$ implies $|E^i|> 8w^{3/2}$}
  \State $K_i=(A_i,B_i; A_i \times B_i)$ is a biclique found by applying Theorem~\ref{th_biclique}, $A_i \subseteq A^i$, $B_i\subseteq B^i$
  \State $\mathcal{K} \gets \mathcal{K} \cup \{K_i\}$
  \State $G_{i+1}=(A^i\setminus A_i, B^i \setminus B_i; E^i \setminus (A_i \times B^i  \cup A^i\times B_i))$ 
  \Comment{we remove $A_{i}$ and $B_{i}$ from $G_i$ to obtain $G_{i+1}$}
  \State $i \gets i+1$
\EndWhile
\LeftComment{we obtain the final irreducible $G_i=(A^i,B^i;E^i)$}
\State $A_{\text{rest}} \gets A^i$
\State $B_{\text{rest}} \gets B^i$ 
\EndProcedure
\end{algorithmic}
\end{algorithm}

Let $\ell$ be the final iteration of Algorithm~\ref{alg_biclique}, and $A^\prime= \bigcup_{i=1}^\ell  A_i$, $B^\prime= \bigcup_{i=1}^\ell  B_i$.
Clearly $A= A^\prime \cup A_{\text{rest}}$ and $B= B^\prime \cup B_{\text{rest}}$, see Figure~\ref{fig3}. 
The obtained decomposition admits the following properties.

\begin{figure}[h]
\begin{center}  
 \begin{tikzpicture}[
    scale=0.6,
    axis/.style={thick, ->, >=stealth'},
    important line/.style={thick},
    dashed line/.style={dashed, thin},
    pile/.style={thick, ->, >=stealth', shorten <=2pt, shorten
    >=2pt},
    every node/.style={color=black}
   ]  
\node at (-1,0.5) {\Large $A$};
\node at (-1,2.5) {\Large $B$};
\biklika{1}{0}{2}{1}{1}{A}{B}
\biklika{4}{0}{2}{1}{2}{A}{B}
\biklika{7}{0}{2}{1}{3}{A}{B}
\biklika{12}{0}{2}{1}{\ell}{A}{B}
\draw[decorate,decoration={brace,amplitude=10pt,mirror},xshift=0.4pt,yshift=-0.4pt](1,-0.2)
-- (14,-0.2) node[black,midway,yshift=-0.6cm] {$A'$};
\draw[decorate,decoration={brace,amplitude=10pt},xshift=0.4pt,yshift=0.4pt](1,3.2)
    -- (14,3.2) node[black,midway,yshift=+0.6cm] {$B'$};
\rest{15}{0}{2}{1}
\wielokropek{10}{0.5}{0.4}
\wielokropek{10}{2.5}{0.4}
\draw (2.9,1) -- (4.2,2);
\draw (15.2,1) -- (13.9,2);
\draw (9,1) -- (15.2,2);
\draw (5.9,1) -- (12.2,2);
\draw (13.8,1) -- (15.4,2); 
 \end{tikzpicture}
\caption{$G_{bip}=(A,B;E)$  partitioned into bicliques $(A_i,B_i)$ and the leftovers ($A_{\text{rest}},B_{\text{rest}}$).} 
\label{fig3}
\end{center}  
\end{figure}
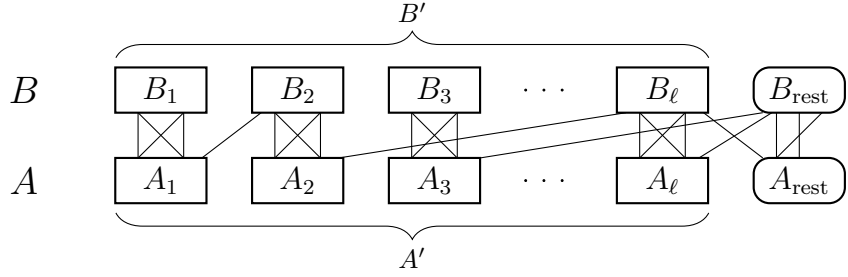

\begin{lemma}\label{lem_bipartite_rest}
 Let $V_{\mathrm{rest}}= A_{\mathrm{rest}} \cup B_{\mathrm{rest}}$ and $E_{\mathrm{rest}}=E \cap (A_{\mathrm{rest}} \times B_{\mathrm{rest}})$.
 There exists a function $N_{\mathrm{rest}}: V_{\mathrm{rest}} \rightarrow \mathcal{P}(V_{\mathrm{rest}})$ 
 such that for every edge $(u,v) \in E_{\text{rest}}$  we have $u \in N_{\mathrm{rest}}(v)$ or $v \in N_{\mathrm{rest}}(u)$.
 Also $|N_{\mathrm{rest}}(u)|= \Oh( n/ \log^3 n)$ holds  for every $u \in V_{\mathrm{rest}}$.
\end{lemma}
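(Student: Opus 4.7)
The plan is to separate the two possible reasons the while loop in Algorithm~\ref{alg_biclique} can terminate and construct $N_{\mathrm{rest}}$ differently in each case. Let $\ell$ denote the final iteration and set $w = |A^\ell|+|B^\ell|$; then at least one of the conditions $w \le \max(c_{\min}, n^{3/4})$ and $|E^\ell| \le w^2/\log^6 w$ must hold. These two regimes require genuinely different ideas.

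In the first regime, where $w \le \max(c_{\min}, n^{3/4})$, I would simply set $N_{\mathrm{rest}}(u) = V_{\mathrm{rest}}\setminus\{u\}$ for every $u\in V_{\mathrm{rest}}$. The edge-covering property is then trivial, and $|N_{\mathrm{rest}}(u)| < w = \Oh(n^{3/4}) = o(n/\log^3 n)$, which is already the claimed bound.

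In the second regime we may assume $w > n^{3/4}$ and $|E^\ell| \le w^2/\log^6 w$. The idea is to produce an orientation of the edges of $E^\ell = E_{\mathrm{rest}}$ in which every vertex has small out-degree, and then take $N_{\mathrm{rest}}(u)$ to be the set of out-neighbors of $u$; the covering property is then automatic. The orientation of choice is the classical degeneracy orientation, obtained by iteratively removing a vertex of minimum degree from the current subgraph of $G_\ell$ and orienting all of its still-present incident edges outward. The maximum out-degree so produced equals the degeneracy $d(G_\ell)$. The key inequality to invoke is that any subgraph $H$ with minimum degree at least $d$ has $|V(H)|\ge d+1$ and hence at least $d(d{+}1)/2$ edges, so $d(G_\ell) \le \sqrt{2\,|E^\ell|} = \Oh(w/\log^3 w)$.

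The final step is to convert the bound $w/\log^3 w$ into one in terms of $n$. Because the first termination condition did not fire, $w > n^{3/4}$, so $\log w \ge \tfrac{3}{4}\log n$ and therefore $w/\log^3 w \le n/\log^3 w = \Oh(n/\log^3 n)$, as required. The only mildly subtle point is spotting that the threshold $n^{3/4}$ in Algorithm~\ref{alg_biclique} is chosen precisely so that $\log w$ remains of the same order as $\log n$ in the non-trivial regime; once this is noticed, no tool deeper than the degeneracy inequality is needed.
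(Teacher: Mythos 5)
Your proof establishes the lemma as stated, and in the interesting (dense) regime it takes a genuinely different route from the paper. Where you orient $E_{\mathrm{rest}}$ by a degeneracy ordering and bound the maximum out-degree by $\sqrt{2|E^\ell|}\le \sqrt{2}\,w/\log^3 w$, the paper instead splits $V_{\mathrm{rest}}$ into ``big'' vertices (those with at least $n/\log^3 n$ incident edges, of which there are at most $2|E_{\mathrm{rest}}|/(n/\log^3 n)\le 2n/\log^3 n$) and ``small'' ones, storing the full neighbourhood for small vertices and only the big neighbours for big ones. Both arguments are elementary, both exploit only the edge bound $|E^\ell|\le w^2/\log^6 w$ together with $w> n^{3/4}$ (so that $\log w=\Theta(\log n)$), and both yield $\Oh(n/\log^3 n)$; your degeneracy argument is arguably cleaner and gives a slightly better constant, while the paper's big/small split avoids invoking any ordering of the vertices. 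Your handling of the degeneracy bound (minimum-degree-$d$ subgraph has at least $d(d+1)/2$ edges) and of the transfer from $w$ to $n$ is correct.

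One caveat worth fixing: in the first regime you set $N_{\mathrm{rest}}(u)=V_{\mathrm{rest}}\setminus\{u\}$, which satisfies the lemma as literally stated but is incompatible with how the lemma is used in Lemma~\ref{lem_label_in}, where the decoder declares $(u,v)\in E_{\mathrm{in}}$ precisely when $u\in N_{\mathrm{rest}}(v)$ or $v\in N_{\mathrm{rest}}(u)$; including non-neighbours would produce false positives. The implicit requirement is $N_{\mathrm{rest}}(u)\subseteq\{v:(u,v)\in E_{\mathrm{rest}}\}$, which your degeneracy orientation satisfies automatically in the second regime; in the first regime simply take $N_{\mathrm{rest}}(u)$ to be the actual neighbourhood of $u$ (its size is still at most $w=\Oh(n^{3/4})$), as the paper does.
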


\begin{proof}
$G_i$ does not satisfy condition from the line~\ref{alin_biclique_conditions} of Algorithm when $A_{\text{rest}}$ and $B_{\text{rest}}$ are created.
There are two possible cases:
\begin{itemize}
 \item $|A_{\text{rest}}| + |B_{\text{rest}}|=|A^i|+ |B^i| \leq \max(c_{\text{min}},n^{3/4})$: 
 set $N_{\text{rest}}(u)= \{v \in V_{\text{rest}}: (u,v)  \in E_{\text{rest}}\}$, it satisfies all the conditions. 
 \item $|E_{\text{rest}}|< w^2/(\log^6 w) \leq n^2/(\log^6 n)$: 
 let $ V_{\text{rest}}= V_{\text{big}} \cup V_{\text{small}}$, 
 where $V_{\text{big}}$ is a set of nodes having at least $n/\log^3 n$ incident edges in the set $E_{\text{rest}}$, 
 whereas $V_{\text{small}}$ are the remaining nodes.
 For $u \in V_{\text{small}}$ set $N_{\text{rest}}(u)= \{v \in V_{\text{rest}}: (u,v) \in E_{\text{rest}}\}$.
 For $u \in V_{\text{big}}$ set $N_{\text{rest}}(u)= \{v \in V_{\text{big}}: (u,v) \in E_{\text{rest}}\}$.
 In the second case $|N_{\text{rest}}(u)|\leq |V_{\text{big}}| \leq 2|E_{\text{rest}}|/(n/\log^3 n) \leq 2n/\log^3 n$.
 Finally, consider an edge $(u,v) \in E_{\text{rest}}$. If $u \in V_{\text{small}}$ then $v \in N_{\text{rest}}(u)$.
 Otherwise $u \in V_{\text{big}}$ and $u \in N_{\text{rest}}(v)$, no matter to which $V_*$ set $v$ belongs.\qedhere
\end{itemize}
 \end{proof}
 
\begin{lemma} \label{lem_layer_size}
For every $1 \leq i \leq \ell$, $|A_i|=|B_i|=\Theta (\log n / \log \log n)$.
\end{lemma}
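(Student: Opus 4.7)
The plan is to apply Theorem~\ref{th_biclique} directly to the graph $G_i$ at the moment the $i$-th biclique is extracted, and then to translate the resulting formula $q=\Theta(\log|V|/\log(|V|^2/|E|))$ through the two while-loop conditions of Algorithm~\ref{alg_biclique}. Write $w=|A^i|+|B^i|$ so that the theorem gives a biclique of size
\[
q = \Theta\!\left(\frac{\log w}{\log(w^2/|E^i|)}\right).
\]
The goal is to show that both $\log w$ and the denominator sit in controlled ranges forced by the loop.

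First, I would use the condition $w > n^{3/4}$ together with the trivial $w \leq n$ to conclude that $\log w = \Theta(\log n)$, and hence $\log\log w = \Theta(\log\log n)$. This pins down the numerator and the size of $\log\log w$ up to constants in terms of $n$ rather than $w$. Next, I would rewrite the density condition $|E^i| > w^2/\log^6 w$ as $w^2/|E^i| < \log^6 w$, and take logarithms to get $\log(w^2/|E^i|) < 6\log\log w = \Theta(\log\log n)$. Plugging these two estimates back into the formula for $q$ immediately yields the lower bound $q = \Omega(\log n / \log\log n)$.

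For the matching upper bound, the point I expect to be the main obstacle is that Theorem~\ref{th_biclique} as stated only certifies the existence of a biclique of the size given by its formula, and if $G_i$ were much denser than the threshold $w^2/\log^6 w$ the formula would produce a larger $q$. The clean way around this is to note that whenever the theorem provides a $K_{q,q}$, any induced sub-biclique $K_{q',q'}$ with $q' \leq q$ is also available for free. I therefore propose to interpret the algorithm as extracting a biclique of size exactly $q^\star = \lceil c\log n / \log\log n\rceil$ for a suitable constant $c$ fitting the $\Omega$-bound above; the existence of such a $K_{q^\star,q^\star}$ subgraph of the $K_{q,q}$ delivered by Theorem~\ref{th_biclique} is immediate, and this normalization pins both $|A_i|$ and $|B_i|$ to $\Theta(\log n/\log\log n)$ uniformly in $i$.

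In summary, the lower bound is the substantive content and follows by substituting the loop conditions into Mubayi--Turán, while the upper bound is a matter of reading the algorithm as extracting a biclique of a predetermined target size. No induction on $i$ is required: both bounds come from the invariants maintained by the while-loop at the moment of extraction.
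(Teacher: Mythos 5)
Correct, and your lower bound is exactly the paper's proof: it likewise just substitutes the two while-loop invariants ($w>n^{3/4}$, hence $\log w=\Theta(\log n)$, and $|E^i|>w^2/\log^6 w$, hence $\log(w^2/|E^i|)=O(\log\log n)$) into the formula of Theorem~\ref{th_biclique}. Your extra care about the upper bound is warranted but goes beyond the paper, whose one-line proof simply asserts $\Theta(\log w/\log(w^2/|E^i|))=\Theta(\log n/\log\log n)$ even though a dense $G_i$ could yield $q$ as large as $\Theta(\log n)$; since the paper only ever uses $q=\omega(1)$ (equivalently, the $\Omega(\log n/\log\log n)$ direction, e.g.\ to bound the number of bicliques by $o(n)$ in Lemma~\ref{lem_label_length}), your truncation to a fixed $q^\star$ is a clean, harmless normalization that makes the stated two-sided $\Theta$ literally true.
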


\begin{proof}
We have $w=|A^i|+ |B^i|> n^{3/4}$ and $|E^i|> w^2/(\log^6 w)$. Theorem~\ref{th_biclique} finds
a biclique of size $\Theta(\log w/\log(w^2/|E^i|))=\Theta (\log n / \log \log n)$.
\end{proof}

\subsection{Encoding}
We apply Algorithm~\ref{alg_biclique} iteratively to decompose the whole $G'_{c}=(V,E_2)$.
Let $s$ be the number of the current iteration.
We take the first two of the remaining layers, $U_s$ and $U_{s+1}$, treat them as an undirected bipartite graph,
and find its biclique decomposition using  Algorithm~\ref{alg_biclique}.
We obtain the set of bicliques $\mathcal{K}$, the leftovers $(A_{\text{rest}},B_{\text{rest}})$,
and the remaining layers $U_{s+2}, U_{s+3}, \ldots, U_k$.
We will soon explain how to encode information about the edges $E_{\text{inter}}$ connecting
the nodes from bicliques to other nodes in the labels $\nodelabel_{\text{inter}}^{s}(\cdot)$.
We will also explain how to encode the information about the edges $E_{\text{in}}$ between
the nodes from (possibly different) bicliques and between $A_{\text{rest}}$ and $B_{\text{rest}}$
in the labels $\nodelabel_{\text{in}}^{s}(\cdot)$.
This allows us to remove all of these edges, and also all nodes from bicliques.
We merge $A_{\text{rest}}$, $B_{\text{rest}}$ to obtain a new layer replacing $U_{s}$ and $U_{s+1}$
and repeat the procedure.
See Algorithm~\ref{alg_label} for a detailed description, and Figure~\ref{fig5} for an illustration
of a single iteration.

\begin{algorithm} 
\scriptsize
\caption{Decomposing consecutive layers of a flat DAG into bicliques} 
\label{alg_label}
 \begin{algorithmic}[1]
 \Procedure{LabelFlatDAG}{$V,E_2,s;U_s,U_{s+1},\ldots, U_{k}$}
\If{$s=k$} 
  \Return 
\EndIf
\State $E_{s,s+1} \gets E_2\cap (U_s \times U_{s+1})$
\State \textproc{FindBicliques}($U_s,U_{s+1};E_{s,s+1}$)
\Comment{we obtain the set of bicliques $\mathcal{K}$ and the leftovers $(A_{\text{rest}},B_{\text{rest}})$}
\State $V^\prime \gets A^\prime \cup B^\prime$, $V_{\text{rest}} \gets A_{\text{rest}} \cup B_{\text{rest}}$, 
$\tilde{V} \gets V \setminus (V^\prime \cup V_{\text{rest}})$
\State $E_{\text{inter}} \gets E_2\cap ((V^\prime\times \tilde{V}) \cup (A^\prime \times B_{\text{rest}}) \cup
(A_{\text{rest}} \times B^\prime))$ %\Comment{edges bicliques --- upper layers and bicliques --- leftovers}
\State $E_{\text{in}} \gets E_2 \cap ((A^\prime \times B^\prime) \cup (A_{\text{rest}} \times
B_{\text{rest}}))$ %\Comment{edges inside bicliques, inside leftovers}
\State Store information about $E_{\text{in}}$ and $E_{\text{inter}}$  in the labels $\nodelabel_{\text{in}}^{s}(\cdot)$, $\nodelabel_{\text{inter}}^{s}(\cdot)$ \label{alin_label}
\State $E_2\gets E_2 \setminus ( E_{\text{in}} \cup E_{\text{inter}})$ \Comment{remove edges}
\State $V \gets V \setminus V^\prime$ \Comment{remove nodes}
\State $U_{s+1} \gets A_{\text{rest}} \cup B_{\text{rest}}$ \Comment{create a new layer} \label{alin_u2}
\State \textproc{LabelFlatDAG}($V,E_2,s+1;U_{s+1}, U_{s+2}, \ldots,U_{k}$).
\EndProcedure
\end{algorithmic}
\end{algorithm}

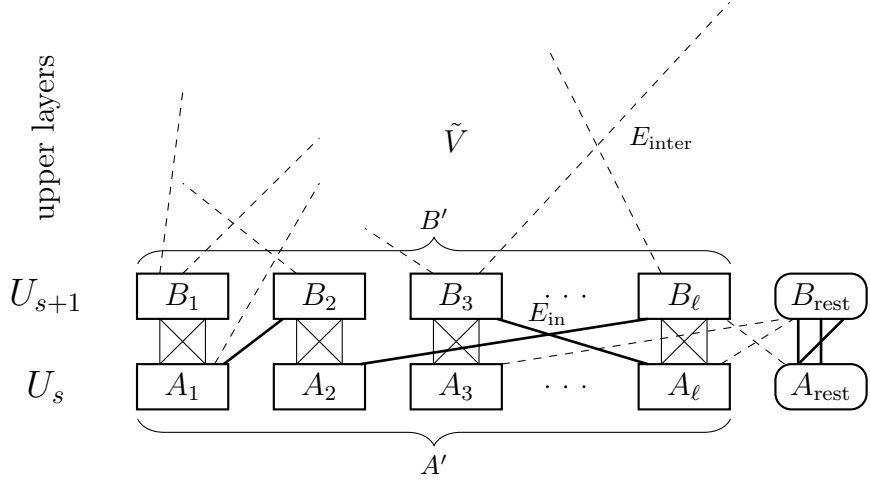
\begin{figure}[h]
\begin{center}  
 \begin{tikzpicture}[
    scale=0.6,
    axis/.style={thick, ->, >=stealth'},
    important line/.style={thick},
    dashed line/.style={dashed, thin},
    pile/.style={thick, ->, >=stealth', shorten <=2pt, shorten
    >=2pt},
    every node/.style={color=black}
   ]    
\node at (-1,0.5) {\Large $U_s$};
\node at (-1,2.5) {\Large $U_{s+1}$};
\biklika{1}{0}{2}{1}{1}{A}{B}
\biklika{4}{0}{2}{1}{2}{A}{B}
\biklika{7}{0}{2}{1}{3}{A}{B}
\biklika{12}{0}{2}{1}{\ell}{A}{B}
% underbraces
\draw[decorate,decoration={brace,amplitude=10pt,mirror},xshift=0.4pt,yshift=-0.4pt](1,-0.2)
-- (14,-0.2) node[black,midway,yshift=-0.6cm] {$A'$};
% overbraces
\draw[decorate,decoration={brace,amplitude=10pt},xshift=0.4pt,yshift=0.4pt](1,3.2)
    -- (14,3.2) node[black,midway,yshift=+0.6cm] {$B'$};
\restb{15}{0}{2}{1}
\wielokropek{10}{0.5}{0.4}
\wielokropek{10}{2.5}{0.4}
% inter
\draw[dashed] (1.5,3) -- (2,7);
\draw[dashed] (2,3) -- (5,6); 
\draw[dashed] (4.5,3) -- (2,5);
\draw[dashed] (7.5,3) -- (6,4);
\draw[dashed] (8.5,3) -- (14,9);
\draw[dashed] (12.5,3) -- (10,8); 
\draw[dashed] (2.7,1) -- (5,5); 
\draw [dashed](13.8,1) -- (15.4,2); 
\draw [dashed](15.2,1) -- (13.9,2);
\draw [dashed](9,1) -- (15.2,2);
\node[] at (12.5,6) {\normalsize $E_{\text{inter}}$};
% in
\draw [line width=0.35mm](2.9,1) -- (4.2,2);
\draw [line width=0.35mm](8.9,2) -- (12.2,1);
\draw [line width=0.35mm](5.9,1) -- (12.2,2);
\node[] at (10,2.1) {\normalsize $E_{\text{in}}$};
% up
\node at (8, 6) {\large $\tilde{V}$};
\node[rotate=90] at (-1, 6) {\large upper layers};
 \end{tikzpicture}
\caption{Layers $U_s$ and $U_{s+1}$ partitioned into bicliques $(A_i,B_i)$ and the leftovers 
($A_{\text{rest}}$, $B_{\text{rest}}$).} 
\label{fig5}
\end{center}  
\end{figure}

Irrespectively of the implementation of line~\ref{alin_label}, the number of iterations is $k=\Oh(\log n)$,
and the current graph $G'_{c}=(V,E_{2})$ remains transitively closed. We proceed
to explain how to implement line~\ref{alin_label}. Fix an iteration $s$ of the procedure.
Let $V^s \subseteq V$ be the set of nodes and $E_2^s \subseteq E_2$ the set of edges considered in this
iteration, with $V$ and $E_{2}$ referring to the initial graph $G'_{c}$.
The auxiliary notation ($A^\prime$, $B^\prime$, $A_{\text{rest}}$, $V^\prime$ etc.) 
refers to the sets defined in the $s$-th iteration.
We will also write just $\nodelabel_{\mathrm{in}}(\cdot)$ and $\nodelabel_{\mathrm{inter}}(\cdot)$ instead of
$\nodelabel_{\mathrm{in}}^s(\cdot)$ and $\nodelabel_{\mathrm{inter}}^s(\cdot)$.

\begin{lemma}\label{lem_label_in}
There is an assignment of labels $\nodelabel_{\mathrm{in}}(\cdot)$ consisting of
$|A^\prime|/2+\Oh(\log n)$ bits for the nodes of $V^\prime$,
$\Oh( n/ \log^2 n)$ bits for the nodes of $V_{\text{rest}}$ and $\Oh(1)$ bits for the remaining nodes
that allows checking in constant time if $(u,v) \in E_{\text{in}}$,
given $\nodelabel_{\mathrm{in}}(u)$ and $\nodelabel_{\mathrm{in}}(v)$.
\end{lemma}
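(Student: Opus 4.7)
The plan is to split $E_{\text{in}}$ into the portion inside $A^\prime \times B^\prime$ and the portion inside $A_{\text{rest}} \times B_{\text{rest}}$, and handle each with a different tool. Every label begins with a constant-size type flag marking the node as belonging to $V^\prime$, $V_{\text{rest}}$, or $\tilde{V}$, together with a single bit indicating its side ($A$-side or $B$-side); nodes in $\tilde{V}$ carry nothing else.

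For the $V^\prime$ part I apply Theorem~\ref{lem_bipartite_1} to the bipartite graph whose parts are $A^\prime$, $B^\prime$ and whose edge set is $E_{\text{in}}\cap (A^\prime\times B^\prime)$. Every extracted biclique is balanced, so $|A^\prime|=|B^\prime|$, and choosing $a=b=|A^\prime|$ with $\alpha=\beta=\lceil|A^\prime|/2\rceil+1$ satisfies the strict inequality $a\alpha+b\beta>ab$ by a margin of $2|A^\prime|$. The theorem therefore produces labels of size $|A^\prime|/2+O(\log n)$ per $V^\prime$-node which simultaneously encode both the within-block edges $A_i\times B_i$ and all cross-block edges between different bicliques, so no separate mechanism for biclique indices is needed for adjacency.

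For the $V_{\text{rest}}$ part I invoke the neighborhood map $N_{\text{rest}}$ of Lemma~\ref{lem_bipartite_rest}: every edge in $A_{\text{rest}}\times B_{\text{rest}}$ is witnessed by at least one endpoint and $|N_{\text{rest}}(u)|=O(n/\log^3 n)$. Each $u\in V_{\text{rest}}$ stores the global identifiers of the elements of $N_{\text{rest}}(u)$ inside the dictionary of Theorem~\ref{th_dict}; this costs $O(n/\log^2 n)$ bits and supports membership queries in constant time.

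The decoder reads the type flags and sides of both labels. When both endpoints lie in $V^\prime$ on opposite sides it invokes the decoder of Theorem~\ref{lem_bipartite_1}; when one lies in $A_{\text{rest}}$ and the other in $B_{\text{rest}}$ it checks whether either appears in the other's $N_{\text{rest}}$ dictionary; in every remaining combination, including same-side pairs and any pair involving $\tilde{V}$, it returns no, using the containment $E_{\text{in}}\subseteq (A^\prime\times B^\prime)\cup(A_{\text{rest}}\times B_{\text{rest}})$. The only genuinely delicate step is instantiating Theorem~\ref{lem_bipartite_1} with $\alpha=\beta=|A^\prime|/2+O(1)$ while keeping its strict inequality intact; this is what pins down the $|A^\prime|/2$ leading term in the bound and is the hard part of the construction, while the $V_{\text{rest}}$ and $\tilde{V}$ cases are essentially bookkeeping on top of Lemma~\ref{lem_bipartite_rest} and Theorem~\ref{th_dict}.
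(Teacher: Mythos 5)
Your proposal is correct and follows essentially the same route as the paper: a constant-size type flag, Theorem~\ref{lem_bipartite_1} applied to $(A^\prime,B^\prime;E_2\cap(A^\prime\times B^\prime))$ with $\alpha=\beta=|A^\prime|/2+1$ (valid since the balanced bicliques give $|A^\prime|=|B^\prime|$), and the dictionary of Theorem~\ref{th_dict} on the sets $N_{\mathrm{rest}}(\cdot)$ from Lemma~\ref{lem_bipartite_rest} for the leftover nodes. The only cosmetic difference is that the paper's flag $\mathrm{inf}(u)$ also records whether $u$ was already removed in an earlier iteration, which your constant-size flag can trivially accommodate.
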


\begin{proof}
For any $u\in V$, $\nodelabel_{\text{in}}(u)$ consists of the following ingredients.
First, we store an integer $\text{inf}(u)$ encoding the information whether $u$ was already
removed from the graph, or which of the sets $A^\prime$, $B^\prime$, $A_{\text{rest}}$, $B_{\text{rest}}$, $\tilde{V}$
does it belong to. Then we have two cases:
\begin{description}
\item[$u\in V^{\prime}$]: we append the label $\nodelabel_\text{bip}(u)$ obtained by applying Theorem~\ref{lem_bipartite_1}
on the bipartite graph $(A^\prime, B^\prime; E^{s}_{2}\cap (A^{\prime} \times B^{\prime}))$ with parameters $\alpha=\beta= |A^\prime|/2+1$,
\item[$u\in V_{\text{rest}}$]: we append the structure described in Theorem~\ref{th_dict} applied on the set $N_{\text{rest}}(u)$ from Lemma~\ref{lem_bipartite_rest}. 
\end{description}

Given $\nodelabel_{\mathrm{in}}(u)$ and $\nodelabel_{\mathrm{in}}(v)$, we proceed as follows.
By inspecting $\text{inf}(u)$, $\text{inf}(v)$ we can distinguish the following three options:
\begin{enumerate}
\item If $u \in A^\prime$ and $v\in B^\prime$, then using $\nodelabel_\text{bip}(u)$ and $\nodelabel_\text{bip}(v)$ we can check whether $(u,v) \in E_2 \cap (A^\prime \times B^\prime)$.
\item If $u \in A_{\text{rest}}$ and $v\in B_{\text{rest}}$, then we can check whether $u \in N_{\text{rest}}(v)$ or $v \in N_{\text{rest}}(u)$
using the dictionaries stored in both labels.
\item Otherwise $(u,v)\notin E_{\text{in}}$.
\end{enumerate}

It is straightforward to verify that the sizes of labels are as required and the check can be implemented in constant time.
\end{proof}

\begin{lemma}\label{lem_label_inter}
Let $\alpha= \ceil{|V^s|/3-|A^\prime|/2}$, $\beta=\ceil{2|A^\prime|/3}$, and $\ell$ be the number of bicliques found
in the current iteration.
There is an assignment of labels $\nodelabel_{\mathrm{inter}}(\cdot)$ consisting of
$\alpha+\Oh(\log n)$ bits for the nodes of $V^\prime$,
$\beta+ \Oh(\log n)+ \ell$ bits for the nodes of $V^s\setminus V^\prime$ and $\Oh(1)$ bits for the remaining nodes,
that allows checking in constant time if $(u,v) \in E_{\text{inter}}$, given $\nodelabel_{\mathrm{inter}}(u)$ and $\nodelabel_{\mathrm{inter}}(v)$.
\end{lemma}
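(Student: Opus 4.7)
The plan is to exploit the biclique structure to compress the encoding of $E_{\text{inter}}$. The key observation is that since the graph is transitively closed, whenever some $b \in B_i$ satisfies $b \leadsto u$ for a node $u \in \tilde V$ in a later layer, every $a \in A_i$ also satisfies $a \leadsto u$ via the biclique edges $A_i \times B_i$. Consequently, for each $u \in \tilde V$ and each biclique $i$, we only need individual reachability bits from \emph{one} side: if some $b \in B_i$ reaches $u$ then all of $A_i$ reaches $u$ (so only the $B_i$-bits are informative), and otherwise no $b \in B_i$ reaches $u$ (so only the $A_i$-bits are informative).

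Concretely, for every $u \in V^s \setminus V'$ I would store an $\ell$-bit mask $D_u$ with $D_u[i] = B$ iff some $b \in B_i$ reaches $u$. For $u \in A_{\text{rest}}$ I fix $D_u[i] = B$ and for $u \in B_{\text{rest}}$ I fix $D_u[i] = A$ by convention, since these same-layer cases admit no biclique compression. I then form an auxiliary bipartite graph $\hat H$ whose left side has $|A'|$ slots $(i,k)$ with $1 \leq k \leq |A_i|$ and whose right side is $V^s \setminus V'$. The edge $((i,k), u)$ is placed in $\hat H$ precisely when the ``active'' individual bit is $1$: namely $(a_i^k, u) \in E_2$ if $D_u[i] = A$, or $(b_i^k, u) \in E_2$ if $D_u[i] = B$ (with the direction flipped for $u \in A_{\text{rest}}$). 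Applying Theorem~\ref{lem_bipartite_1} to $\hat H$ with the specified $\alpha$ and $\beta$ yields a slot label of $\alpha + \Oh(\log n)$ bits which I attach to both $a_i^k$ and $b_i^k$, together with $\Oh(\log n)$ identifier bits (biclique index, position, side). Each $u \in V^s \setminus V'$ receives the $\beta$-bit bipartite label, the $\ell$-bit mask $D_u$, and $\Oh(\log n)$ identifier bits; nodes outside $V^s$ carry only a flag bit.

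To invoke Theorem~\ref{lem_bipartite_1} with $a = |A'|$ and $b = |V^s| - 2|A'|$, I verify the precondition $a\alpha + b\beta > ab$ by direct substitution: using $\alpha \geq |V^s|/3 - |A'|/2$ and $\beta \geq 2|A'|/3$, the left-hand side is at least $|A'| \cdot |V^s| - \tfrac{11}{6}|A'|^2$, exceeding $ab = |A'| \cdot |V^s| - 2|A'|^2$ by $|A'|^2/6$, which is positive whenever $|A'| \geq 1$ (the case $|A'| = 0$ is vacuous since then $E_{\text{inter}} = \emptyset$). The decoder first classifies both nodes from their identifier bits. The only candidate cases for an $E_{\text{inter}}$-edge are: (i) $u = a_i^k$ and $v \in \tilde V$, where we return \emph{true} if $D_v[i] = B$ and otherwise consult the bipartite decoder on slot $(i,k)$ and $v$; (ii) $u = b_i^k$ and $v \in \tilde V$, where we return \emph{false} if $D_v[i] = A$ and otherwise consult the bipartite decoder; and (iii) the cross-cases $u \in A_{\text{rest}}, v = b_i^k$ or $u = a_i^k, v \in B_{\text{rest}}$, each resolved by a single bipartite query on the appropriate slot. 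Every case requires a constant number of lookups.

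The main obstacle is the double interpretation of each slot $(i,k)$: depending on $D_v[i]$ carried by the other label, the same slot must serve as the ``address'' of $a_i^k$ or of $b_i^k$, and the encoding must still be decodable locally. Verifying that the resulting $\hat H$ admits a Theorem~\ref{lem_bipartite_1} labeling within the budget is precisely what forces the particular choices $\alpha = \ceil{|V^s|/3 - |A'|/2}$ and $\beta = \ceil{2|A'|/3}$ declared in the lemma statement.
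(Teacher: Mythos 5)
Your proposal is correct and follows essentially the same route as the paper: the same transitivity observation for bicliques, the same auxiliary bipartite graph whose left side consists of the paired nodes $a_i^k\leftrightarrow b_i^k$ and whose right side is $V^s\setminus V^\prime$, the same $\ell$-bit per-biclique mask selecting which side's bit is ``active,'' and the same invocation of Theorem~\ref{lem_bipartite_1} with the stated $\alpha,\beta$ (your explicit check that the slack is $|A^\prime|^2/6$ is a nice addition the paper omits). The decoder case analysis also matches the paper's four cases.
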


\begin{proof}
We first verify that
\begin{equation*}
\alpha |A^\prime|+ \beta(|V^s|-2|A^\prime|)> |A^\prime|(|V^s|-2|A^\prime|).
\end{equation*}
Now we construct an undirected bipartite graph $\hat{G}=(\hat{A},\hat{B};\hat{E})$.
Every node of $\hat{B}$ corresponds to a node of $V^s\setminus V^\prime$. 
The definition of $\hat{A}$ is more complicated. Recall that 
$A^\prime= \bigcup_{i=1}^\ell  A_i$ and $B^\prime= \bigcup_{i=1}^\ell  B_i$.
The bicliques are balanced, so we have the natural pairing of the nodes in $A_{i}$ and $B_{i}$.
Therefore, we have a pairing of the nodes of $A^{\prime}$ and $B^{\prime}$.
Every node of $\hat{A}$ corresponds to such a pair of nodes $a \leftrightarrow b$, where
$a\in A^{\prime}$ and $b\in B^{\prime}$. Thus $|\hat{A}|=|A^\prime|=|B^\prime|$.

Observe that if for some $y \in B_i$ and $z\in V^s$ we have $(y,z) \in E_{\text{inter}}$,
then $(x,z) \in E_{\text{inter}}$ for every $x \in A_i$, by the graph being transitively closed
and $(A_{i},B_{i})$ being a biclique.
Let $\hat{a} \in \hat{A}$ correspond to $a_j \leftrightarrow b_j$, where $a_j \in A_i$, $b_j \in B_i$
(we say that $\hat{a}$ corresponds to both $a_{j}$ and $b_{j}$),
and let $\hat{b} \in \hat{B}$ correspond to $b\in V^s\setminus V^\prime$.
Whether $(\hat{a},\hat{b})\in \hat{E}$ depends on the location of $b$ in $V^{s}$ and the
edges in $E_{\text{inter}}$. Exactly one of the following cases occurs:
\begin{enumerate}\label{desc}
\item $b \in A_{\text{rest}}$, so $b$ is not adjacent to $A_i$ (in particular not to $a_j$): $(\hat{a},\hat{b}) \in \hat{E}$ iff $(b_j,b) \in E_{\text{inter}}$,   
\item $b \in B_{\text{rest}}$, so $b$ is not adjacent to $B_i$ (in particular not to $b_j$): $(\hat{a},\hat{b}) \in \hat{E}$ iff $(a_j,b) \in E_{\text{inter}}$,   
\item $b \in \tilde{V}$, $b$ is adjacent to some node of $B_{i}$, so $(a_j,b)\in E_{\text{inter}}$: $(\hat{a},\hat{b}) \in \hat{E}$ iff $(b_j,b) \in E_{\text{inter}}$,
\item $b \in \tilde{V}$, $b$ is not adjacent to any node of $B_i$, so $(b_j,b)\notin E_{\text{inter}}$: $(\hat{a},\hat{b}) \in \hat{E}$ iff $(a_j,b) \in E_{\text{inter}}$.
\end{enumerate}
We apply Theorem~\ref{lem_bipartite_1} on $\hat{G}$ with parameters $\alpha$, $\beta$ to obtain
the labels $\nodelabel_\text{bip}(\cdot)$.
For any $u\in V$, $\nodelabel_{\text{inter}}(u)$ consists of the following ingredients.
First, we store an integer $\text{inf}(u)$ encoding the information whether $u$ was already removed from the graph,
or which of the sets $A^\prime$, $B^\prime$, $A_{\text{rest}}$, $B_{\text{rest}}$, $\tilde{V}$ does it belong to.
Second, we append $\nodelabel_\text{bip}(\hat{u})$, where $\hat{u}$ corresponds to $u$ in $\hat{G}$.
Then we have two cases:
\begin{description}
\item[$u \in V^\prime$]: we append the index $i$ such that $u \in A_i \cup B_i$,
\item[$u \in V^s \setminus V^\prime$]: we append a bit-table $\text{B}_u[\cdot]$
of length $\ell$, in which $\text{B}_u[i]$ stores the information whether $u$ is adjacent to some node of $B_{i}$.
\end{description}

Given $\nodelabel_{\mathrm{inter}}(u)$ and $\nodelabel_{\mathrm{inter}}(v)$, we proceed as follows.
First we verify that $u,v \in V^s$ using $\text{inf}(u)$ and $\text{inf}(v)$, as otherwise $(u,v) \notin E_{\text{inter}}$.
Let $\hat{u}, \hat{v}$ correspond to $u$ and $v$ in $\hat{G}$.
By inspecting $\text{inf}(u)$ and $\text{inf}(v)$, we can check if $\hat{u}$ and $\hat{v}$ belong
to $\hat{A}$ or $\hat{B}$.
If $\hat{u}$, $\hat{v}$ belong both to the $\hat{A}$ or $\hat{B}$, then $(u,v) \notin E_{\text{inter}}$ and we are done.
By swapping $u$ and $v$ we can thus assume that $\hat{u} \in \hat{A}$ and $\hat{v} \in \hat{B}$.
Using $\nodelabel_\text{bip}(\hat{u})$ and $\nodelabel_\text{bip}(\hat{v})$ we can then check if $(\hat{u},\hat{v}) \in \hat{E}$.
From $\nodelabel_{\text{inter}}(u)$ we extract the index $i$ such that $u \in A_i \cup B_i$,
and by additionally inspecting $\text{inf}(u)$ we know if $u\in A_{i}$ or $u\in B_{i}$.
By inspecting $\text{inf}(v)$ we know whether $v \in A_{\text{rest}}$, $v \in B_{\text{rest}}$, or $v \in \tilde{V}$,
and by accessing the appropriate entry of $\text{B}_v[\cdot]$ we know if $v$ is adjacent to some node
of $B_{i}$. This allows us to distinguish between the four possible cases and check if $(u,v)\in E_{\text{inter}}$.
In more detail, we have the following possibilities:
\begin{enumerate}
\item $v \in A_{\text{rest}}$, if $u\in A_{i}$ then we return false, and if $u\in B_{i}$ we return $(\hat{u},\hat{v}) \in \hat{E}$,
\item $v \in B_{\text{rest}}$, if $u\in B_{i}$ then we return false, and if $u\in A_{i}$ then we return $(\hat{u},\hat{v}) \in \hat{E}$,
\item $v \in \tilde{V}$ and $v$ is adjacent to some node of $B_{i}$, if $u\in A_{i}$ we return true, and if $u\in B_{i}$
we return $(\hat{u},\hat{v}) \in \hat{E}$,
\item $v \in \tilde{V}$ and $v$ is not adjacent to any node of $B_i$, if $u\in A_{i}$ we return $(\hat{u},\hat{v}) \in \hat{E}$,
and if $u\in B_{i}$ we return false.
\end{enumerate}

It is straightforward to verify that the sizes of labels are as required and the check can be implemented in constant time.
\end{proof}

Note that in some sense the four cases from the proof of Lemma~\ref{lem_label_inter}, by the structure of the found bicliques,
allow us to store information about two possible edges ($(a_j,b), (b_j,b)$) in just a single bit.
In a similar way, Munro and Nicholson were able to obtain their centralised structure consisting of $n^2/4+o(n^2)$ bits.
Unfortunately, for a labeling scheme, when the existence of an edge from $E_{\text{inter}}$ is remembered by a node from $\hat{A}$,
one bit is used in the labels of both $a_j$ and $b_j$.
Still, only a single bit is used when the existence of an edge is stored by a node from $\hat{B}$.
This allows us to achieve a nontrivial upper bound on the total length of the label.

\begin{lemma}\label{lem_label_length}
For every $u \in V$, $\sum_{s=1}^{k} | \nodelabel_{\text{in}}^s(u)| + |\nodelabel_{\text{inter}}^s(u) | = n/3+o(n)$.
\end{lemma}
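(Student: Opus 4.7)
The plan is a careful node-by-node accounting leveraging a telescoping identity. Fix $u\in V$ and let $s_0$ be the unique iteration in which $u$ is placed into the biclique set $V'$ and thereby removed from the graph; if $u$ survives all $k$ iterations, set $s_0=k+1$. For each iteration $s$, let $A'_s$, $V^s$, $\alpha_s$, $\beta_s$, $\ell_s$ denote the quantities from Algorithm~\ref{alg_label} evaluated in that iteration. I will split the sum $\sum_s(|\nodelabel_{\text{in}}^s(u)|+|\nodelabel_{\text{inter}}^s(u)|)$ into the three ranges $s<s_0$, $s=s_0$, $s>s_0$, and apply Lemmas~\ref{lem_label_in} and~\ref{lem_label_inter} to each.

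In the removal iteration $s=s_0$ (if it occurs), Lemma~\ref{lem_label_in} charges $|A'_{s_0}|/2+\Oh(\log n)$ bits and Lemma~\ref{lem_label_inter} charges $\alpha_{s_0}+\Oh(\log n)$ bits. By the definition $\alpha_{s_0}=\lceil|V^{s_0}|/3-|A'_{s_0}|/2\rceil$, these add up to exactly $|V^{s_0}|/3+\Oh(\log n)$. For $s<s_0$ the node lies in $V^s\setminus V'$, so Lemma~\ref{lem_label_inter} contributes $\lceil 2|A'_s|/3\rceil+\Oh(\log n)+\ell_s$ bits, while Lemma~\ref{lem_label_in} contributes at most $\Oh(n/\log^2 n)$ bits (attained when $u\in V_{\text{rest}}$). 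For $s>s_0$ the node has been removed and each iteration costs only $\Oh(1)$.

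The dominant contributions telescope. Since each biclique is balanced with $|A_i|=|B_i|$, iteration $s$ permanently removes $2|A'_s|$ nodes, hence
\begin{equation*}
|V^{s_0}|=n-\sum_{s<s_0}2|A'_s|,
\end{equation*}
and consequently
\begin{equation*}
\sum_{s<s_0}\frac{2|A'_s|}{3}+\frac{|V^{s_0}|}{3}=\frac{n}{3}.
\end{equation*}
If $s_0=k+1$ the second term is absent, but $\sum_{s\leq k}2|A'_s|\leq n$ still bounds the dominant part by $n/3$.

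It remains to verify that the residual terms sum to $o(n)$. The $\Oh(\log n)$ per-iteration overheads contribute $\Oh(\log^2 n)$ in total; the $\Oh(n/\log^2 n)$ cost for iterations with $u\in V_{\text{rest}}$ contributes $\Oh(n/\log n)$ across the $\Oh(\log n)$ iterations; and the bit-tables of length $\ell_s$ contribute $\sum_s\ell_s$, which by Lemma~\ref{lem_layer_size} satisfies $\sum_s\ell_s\cdot\Theta(\log n/\log\log n)=\sum_s|A'_s|\leq n/2$, yielding $\sum_s\ell_s=\Oh(n\log\log n/\log n)=o(n)$. Adding everything gives the claimed $n/3+o(n)$. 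The main subtlety is ensuring that the identity $|A'_{s_0}|/2+\alpha_{s_0}=|V^{s_0}|/3+\Oh(1)$ slots in exactly where the ``missing'' $2|A'_{s_0}|/3$ would appear in a continued pre-removal sum; this is precisely what the tuned choices of $\alpha$ and $\beta$ in Lemma~\ref{lem_label_inter} are engineered for, and most of the bookkeeping lies in checking that the ceiling and $\Oh(\log n)$ slacks accumulate to at most $o(n)$ over $u$'s lifetime.
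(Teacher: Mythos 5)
Your proof is correct and follows essentially the same route as the paper's: the same case split on the removal iteration, the same cancellation $|A'_{s_0}|/2+\alpha_{s_0}=|V^{s_0}|/3+\Oh(1)$, the same telescoping identity $\sum_{s<s_0}2|A'_s|=n-|V^{s_0}|$, and the same bound on $\sum_s\ell_s$ via the $\Theta(\log n/\log\log n)$ biclique size. Your explicit handling of the case where $u$ is never removed ($s_0=k+1$) is a small point the paper glosses over, but it does not change the argument.
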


\begin{proof}
Let $i$ be the iteration in which $u$ is removed from the graph.
Recall that $V^{i}$ is the set of nodes considered in the $i$-th iteration,
and let $A^\prime_s$ denote set $A^\prime$ in the $s$-th iteration.
By Lemma~\ref{lem_label_in},
the length of $\nodelabel_{\text{in}}^s(u)$ is:
\begin{equation*}
\begin{split}
\Oh( n/ \log^2 n) &  \quad \text{for} \quad s< i \\
|A^\prime_i|/2 + \Oh(\log n)  &    \quad \text{for} \quad s= i \\
\Oh(1)         &    \quad \text{in other cases}.\\
\end{split}
\end{equation*}
This overall sums up to $o(n)+|A^\prime_i|/2$ bits, as $k=\Oh(\log n)$.
By Lemma~\ref{lem_label_inter}, the length of $\nodelabel_{\text{inter}}^s(u)$ is:
\begin{equation*}
\begin{split}
\ceil{2|A^\prime_s|/3}+ \Oh(\log n)+ \ell_s &  \quad \text{for} \quad s< i \\
\ceil{|V^i|/3-|A^\prime_i|/2} + \Oh(\log n)  &    \quad \text{for} \quad s= i \\
\Oh(1)         &    \quad \text{in other cases},\\
\end{split}
\end{equation*}
where $\ell_s$ is the number of found bicliques in the $s$-th iteration.
The sum of $2|A^\prime_s|$ over all iterations $s<i$ is equal to the number of removed nodes until the $i$-th iteration,
which is $n-|V^i|$.
The sum of $\ell_s$ is not greater than the number of found bicliques.
Because each biclique is of size $\Theta (\log n / \log \log n)$, this number is $o(n)$.
This makes the whole sum:
\begin{equation*}
o(n)+|A^\prime_i|/2 + (n-|V^i|)/3+ o(n) +|V^i|/3-|A^\prime_i|/2 + \Oh(\log n)= n/3 + o(n). \qedhere
\end{equation*}
\end{proof}

\subsection{Decoding}

We define the label $\nodelabel_2(u)$ to be the concatenation of all the labels
$\nodelabel^{s}_{\text{in}}(u)$ and $\nodelabel^{s}_{\text{inter}}(u)$ generated by Algorithm~\ref{alg_label} for $s=1,2,\ldots,k$.
Additionally, we store $\Oh(\log n)$ indices denoting where every $\nodelabel^{s}_{\text{in}}(u)$ and
$\nodelabel^{s}_{\text{inter}}(u)$ begins and ends in $\nodelabel_2(u)$. As each index needs $\Oh(\log n)$ bits,
this takes $\Oh(\log^{2}n)$ extra bits stored in the very beginning of the label, and allows us to access
any $\nodelabel^{s}_{\text{in}}(u)$ and $\nodelabel^{s}_{\text{inter}}(u)$ in constant time.
Additionally, $\nodelabel_2(u)$ stores two numbers $\text{Del}(u)$ and $\text{IU}(u)$, each in $\Oh(\log\log n)$ bits.
$\text{Del}(u)$ is the last iteration in which $u$ is present in the graph, that is, the largest $s$ such that
$u\in V^{s}$.
$\text{IU}(u)$ is the index of the initial layer of $u$ in $G'_{c}$, that is, $i$ such that $u\in U_{i}$.
By Lemma~\ref{lem_label_length}, $|\nodelabel_2(u)| = n/3+o(n)$. 

\begin{lemma}\label{lem_e2}
Given $\nodelabel_{2}(u)$ and $\nodelabel_{2}(v)$ we can check in constant time if $(u,v)\in E_{2}$. 
\end{lemma}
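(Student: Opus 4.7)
The plan is to determine, from $\nodelabel_2(u)$ and $\nodelabel_2(v)$, the unique iteration $s^{\star}$ of Algorithm~\ref{alg_label} in which a potential edge $(u,v)\in E_2$ would be removed, to extract the $\nodelabel_{\text{in}}^{s^\star}$ and $\nodelabel_{\text{inter}}^{s^\star}$ slices of both labels using the $\Oh(\log n)$ precomputed offsets stored at the front of $\nodelabel_2$, and to invoke the decoders of Lemma~\ref{lem_label_in} and Lemma~\ref{lem_label_inter} on these slices. Returning the OR of the two answers will give the desired bit. Since the offset lookup is constant time and each lemma decoder runs in constant time, the whole task reduces to computing $s^\star$ in constant time from the $\text{IU}$ and $\text{Del}$ fields carried by $\nodelabel_2$.

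To compute $s^\star$, I first read $\text{IU}(u)$, $\text{IU}(v)$, $\text{Del}(u)$, $\text{Del}(v)$ and, since every layer of $G'_c$ is an antichain, return false immediately if $\text{IU}(u)\ge\text{IU}(v)$. Otherwise, writing $i=\text{IU}(u)<j=\text{IU}(v)$, I set
\begin{equation*}
s^\star \;=\; \min\bigl(\text{Del}(u),\, j-1\bigr).
\end{equation*}
The correctness rests on a tracking invariant: a node originally in $U_r$ that has not yet been placed into a biclique appears in the current top layer in iteration $r-1$ and in the current bottom layer in every subsequent iteration until it is finally absorbed into some $A_i$. In particular $v\in\tilde V_s$ for every $s<j-1$, and the only way that an edge incident to $v$ can be removed in such an iteration is through the $V'\times\tilde V$ component of $E_{\text{inter}}$, which forces $u\in V'_s$ and hence $s=\text{Del}(u)$. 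Consequently, if $\text{Del}(u)<j-1$ the edge is removed in iteration $\text{Del}(u)=s^\star$; otherwise $u$ is still alive in iteration $j-1$, where $u\in U_i$ with $i<j$ places it in the current bottom layer while $v$ sits in the current top, so $(u,v)$ becomes a within-current-layers edge and is removed (into $E_{\text{in}}$ or $E_{\text{inter}}$, depending on biclique/rest statuses) in iteration $j-1=s^\star$.

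With $s^\star$ in hand, the bounds $\text{Del}(u)\ge s^\star$ and $\text{Del}(v)\ge j-1\ge s^\star$ ensure that both nodes are still present in iteration $s^\star$ and that their $\nodelabel_{\text{in}}^{s^\star}$ and $\nodelabel_{\text{inter}}^{s^\star}$ slices are the ones genuinely produced by Algorithm~\ref{alg_label}. Because $E_{\text{in}}^{s^\star}$ and $E_{\text{inter}}^{s^\star}$ are disjoint and their union contains exactly the edges removed in iteration $s^\star$, the OR of the two lemma calls equals the indicator of $(u,v)\in E_2$ by the characterisation above. The only delicate step is checking that no edge of $E_2$ can be destroyed in any iteration $s<s^\star$, which is precisely the layer-tracking argument sketched above; once this invariant is established, the rest of the proof is a clean composition of decoders already shown to run in constant time.
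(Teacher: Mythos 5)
Your proposal is correct and follows essentially the same route as the paper: it identifies the unique iteration in which a potential edge $(u,v)$ would be consumed and then invokes the decoders of Lemmas~\ref{lem_label_in} and~\ref{lem_label_inter} on the corresponding label slices. Your formula $s^{\star}=\min(\mathrm{Del}(u),\mathrm{IU}(v)-1)$ is exactly the paper's case analysis ($s=\mathrm{Del}(u)$ when $\mathrm{Del}(u)<\mathrm{IU}(v)-1$, else $s=\mathrm{IU}(v)-1$) written in closed form, and your layer-tracking invariant makes explicit the justification the paper leaves terse.
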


\begin{proof}
Every edge in $E_{2}$ ends up in exactly one of the sets $E_{\text{in}}$ or $E_{\text{inter}}$ defined in some iteration.
Note that we do not have enough time to consider all possible iterations.
Thus, we will first calculate the relevant iteration $s$, and then use $\nodelabel_{\text{in}}^s(u)$, $\nodelabel_{\text{inter}}^s(u)$,
$\nodelabel_{\text{in}}^s(v)$ and $\nodelabel_{\text{inter}}^s(v)$ to check if $(u,v)\in E_{2}$.
We will make sure that $s$ is the unique iteration such that one of the sets $E_{\text{in}}$ or $E_{\text{inter}}$
might contain $(u,v)$.

Assume that $\text{IU}(u)\leq \text{IU}(v)$, as otherwise from the topological ordering $(u,v) \notin E_2$. 
If $\text{IU}(v) \leq 2$, we take $s=1$ as the edges between the first two layers are considered only
in the first iteration.
If $\text{IU}(v)> 2$ then we have two cases:
\begin{description}
\item[$\text{Del}(u)< \text{IU}(v)-1$]: $u$ was removed in the $\text{Del}(u)$-th iteration, and before this iteration
$v$ is not in the first two layers, so we take $s=\text{Del}(u)$,
\item[$\text{Del}(u)\geq \text{IU}(v)-1$]: after the $(\text{IU}(v)-1)$-th iteration both $u$ and $v$ are in the first layer
(or not in the graph anymore) and $u$ is not in any biclique before that iteration,
so we take $s=\text{IU}(v)-1$.
\end{description}

Having identified the appropriate $s$, we use $\nodelabel_{\text{in}}^s(u)$, $\nodelabel_{\text{in}}^s(v)$ to
check if $(u,v)\in E_{in}$ and $\nodelabel_{\text{inter}}^s(u)$, $\nodelabel_{\text{inter}}^s(v)$ to check
if $(u,v)\in E_{inter}$, where $E_{in}$ and $E_{inter}$ are defined in the $s$-th iteration, in constant time.
\end{proof}

\section{Conclusions}
Lemmas~\ref{lem_label_flatten} and ~\ref{lem_e2} allow us to formulate the final theorem:
\mainresult*
\begin{proof}
By Lemma~\ref{lem_digraphs_to_dags}, it is enough to construct a reachability labeling scheme for directed acyclic
graphs on $n$ nodes of size $n/3+o(n)$ and the decoder working in constant time.
Let $G=(V,E)$ be such a DAG, and $G_{c}=(V,E_{c})$ its transitive closure. First,
we flatten $G_{c}$ to obtain a new DAG $G'_{c}=(V,E_{2})$ consisting of $\Oh(\log n)$ layers.
The set of removed edges $E_{1}$ is encoded in the labels $\nodelabel_{1}(\cdot)$ as described in
Lemma~\ref{lem_label_flatten}, using $o(n)$ bits in the label of each node and allowing
checking if $(u,v)\in E_{1}$ given the labels of $u$ and $v$, in constant time.
Next, we proceed as described in Section~\ref{sec:flat} to obtain the labels $\nodelabel_{2}(\cdot)$.
By Lemma~\ref{lem_label_length}, this uses $n/3+o(n)$ bits in the label of each node
and by Lemma~\ref{lem_e2} allows checking if $(u,v)\in E_{2}$ given the labels of
$u$ and $v$ in constant time.
Finally, the label of each node $u$ is the concatenation of $\nodelabel_{1}(u)$
and $\nodelabel_{2}(u)$, with appropriate padding as to make the length
of both parts known and allow accessing any of them in constant time.
\end{proof}

We note that the scheme can be tweaked to guarantee the optimal (up to second-order term) \emph{average} size $n/4$,
matching the centralised bound.

\begin{theorem}\label{th2}
There exists a reachability labeling scheme for directed graphs on $n$ nodes of \emph{average} size $n/4+ o(n)$,
maximum size $n/2 + o(n)$, and with the decoder working in constant time.
\end{theorem}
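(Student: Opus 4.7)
The plan is to reuse the encoder and decoder of Theorem~\ref{th_label_poset} verbatim (and reduce from directed graphs to DAGs via Lemma~\ref{lem_digraphs_to_dags} at the very end), changing only the parameters $\alpha,\beta$ fed to Theorem~\ref{lem_bipartite_1} inside Lemma~\ref{lem_label_inter}. In place of $\alpha=\ceil{|V^s|/3-|A^\prime|/2}$ and $\beta=\ceil{2|A^\prime|/3}$, I would take $\alpha=0$ and $\beta=|A^\prime|+1$. The Hall-type inequality $\alpha|A^\prime|+\beta(|V^s|-2|A^\prime|)>|A^\prime|(|V^s|-2|A^\prime|)$ is then satisfied whenever $|V^s|>2|A^\prime|$; in the degenerate case $|V^s|=2|A^\prime|$ the set $\hat B$ is empty and no inter-labels are needed. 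With this asymmetric split, every biclique node of $V^\prime$ receives only $\Oh(\log n)$ inter-bits, while every node of $V^s\setminus V^\prime$ receives $|A^\prime_s|+\Oh(\log n)+\ell_s$ inter-bits.

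Re-running the bookkeeping of Lemma~\ref{lem_label_length} with these parameters, a node $u$ removed at iteration $i$ accumulates
\begin{equation*}
|\nodelabel_2(u)|=\sum_{s<i}|A^\prime_s|+\tfrac{1}{2}|A^\prime_i|+o(n),
\end{equation*}
the first term being the inter-bits paid while $u$ sits in the upper layers and the second its share of the in-label at iteration $i$. Setting $x_i:=2|A^\prime_i|$ and $S_{i-1}:=\sum_{j<i}x_j$, this rewrites as $S_{i-1}/2+x_i/4+o(n)$. Because the $x_i$ count distinct removed nodes, $S_{i-1}+x_i\leq n$, so the maximum label length is at most $n/2-x_i/4+o(n)\leq n/2+o(n)$, matching the claimed maximum (the $o(n)$-many nodes possibly left in the final layer satisfy the same bound with $x_i=0$).

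Summing over all $n$ nodes and using the elementary identity $\sum_i x_i S_{i-1}=\tfrac{1}{2}\bigl((\sum_i x_i)^2-\sum_i x_i^2\bigr)$ together with $\sum_i x_i\leq n$:
\begin{equation*}
\sum_u|\nodelabel_2(u)|=\tfrac{1}{2}\sum_i x_iS_{i-1}+\tfrac{1}{4}\sum_i x_i^2+o(n^2)=\tfrac{1}{4}\bigl(n^2-\sum_i x_i^2\bigr)+\tfrac{1}{4}\sum_i x_i^2+o(n^2)=n^2/4+o(n^2),
\end{equation*}
so the average label length is $n/4+o(n)$ after adding the $o(n)$-per-node flattening labels $\nodelabel_1(\cdot)$ from Lemma~\ref{lem_label_flatten}.

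The decoder is unchanged: it selects the relevant iteration exactly as in Lemma~\ref{lem_e2} and invokes the constant-time bipartite decoder of Theorem~\ref{lem_bipartite_1} on the asymmetric labels. The only step requiring care is verifying that Theorem~\ref{lem_bipartite_1} remains applicable with $\alpha=0$; the Hall-style existence argument and the explicit construction sketched in its statement go through unchanged in this boundary case, so this is a mild verification rather than a genuine obstacle. Intermediate points on the tradeoff curve between average $n/4$ / maximum $n/2$ and average $n/3$ / maximum $n/3$ can then be obtained by interpolating between the two choices of $(\alpha,\beta)$.
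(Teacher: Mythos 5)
Your proposal matches the paper's proof essentially verbatim: the same reparametrisation $\alpha=0$, $\beta=|A^\prime|+1$ in Lemma~\ref{lem_label_inter}, the same per-node accounting $S_{i-1}/2+x_i/4+o(n)$, and the same conclusions for the maximum and average sizes (your summation identity is in fact more explicit than the paper's sketch). The only inaccuracy is the parenthetical claim that only $o(n)$ nodes may remain in the final layer --- that set can have size $\Theta(n)$, e.g.\ when the bipartite graphs are too sparse for any biclique extraction --- but since such nodes accumulate only $\sum_s|A^\prime_s|+o(n)\le n/2+o(n)$ bits and can only lower the average, the stated bounds are unaffected.
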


\begin{proofs}
To this end, we just modify Lemma~\ref{lem_label_inter}, setting $\alpha = 0$ and $\beta=|A^\prime|+1$.
Then the whole set $E_{\text{inter}}$ is remembered by the nodes in further layers, and no information about these edges is stored by the nodes from $V^\prime$.
The method from Lemma~\ref{lem_label_in} stays intact, so
the nodes from $V_{\text{rest}}$ store $o(n)$ bits and the nodes from $V^\prime$ store $|A^\prime|/2+\Oh(\log n)$ bits.
After that change, take any node $u$ and assume it is removed in the $i$-th iteration.
Then, $\nodelabel_{2}(u)$ uses one bit for every two nodes removed in the previous iterations
and one bit for every four nodes removed in the $i$-th iteration.
More precisely, recall that $A^\prime_s$ denotes the size of set $A^\prime$ in $s$-th iteration of the Algorithm~\ref{alg_label},
and let $V_{prev}$ be the set of nodes erased from the graph before iteration $i$.
Then, the label of $u$ consists of the following elements:

\begin{itemize}
\item Label $\nodelabel_1(u)$ from Lemma~\ref{lem_label_flatten}, which has length $o(n)$.
\item Labels $\nodelabel_{\text{in}}^s(u)$, with total size of $A^\prime_i/2 + o(n)$ bits as in the previous scheme.
\item Labels $\nodelabel_{\text{inter}}^s(u)$. They have lengths $|A^\prime_s|+ \Oh(\log n)+ \ell_s$ for iterations $s < i$
and $\Oh(\log n)$ for the other iterations, so the sum of their sizes is $|V_{prev}|/2+o(n)$.
\item Small additional information, that is indices denoting beginning of each sublabel and numbers $\text{Del}(u)$ and $\text{IU}(u)$.
\end{itemize}

Let us number the nodes in order of being erased from the graph,
and say nodes from the $A^\prime_i$ erased in iteration $i$ received numbers in $[a_i,b_i]$.
Then length of the label for node $u$ is $a_i/2+(b_i-a_i)/4+o(n)$.
It is easy to verify that the sum of the lengths of all the labels is at most $n^2/4+o(n)$.
This is paid for with unbalanced labels, as after the described change to $\nodelabel_{\text{inter}}^s(\cdot)$
maximum size is bounded by $n/2+o(n)$ (with the nodes from further layers having longer labels than the nodes from the previous layers).
\end{proofs}

By improving on the simple upper bound of $n/2+\Oh(\log n)$,
our result brings us closer to resolving the natural question of the space complexity of reachability labeling for directed graphs.
The only lower bound on the worst-case (and also average) size of a label in such a scheme is $n/4$, following from the result on the number of posets,
and our scheme achieves an upper bound of $n/3+o(n)$.
We remark that it does not seem possible to decrease the upper bound 
achieved by our scheme by simply tweaking the parameters, so new ideas are required.

\nocite{CliquePart}
\bibliographystyle{plain}
\bibliography{Efficient_Labeling_for_Reachability_in_Digraphs}

\appendix
\newpage

\section{Labels for bipartite graphs in constant time}
\label{sec:bipartite_proof}
\lembipartite*
\begin{proof}
Given a graph $G=(A,B;E) \in \mathcal{K}_{a,b}$, the encoder $E_\text{bip}$ first
assigns numbers $\fru{0}{a-1}$ to the nodes of $A$ and numbers $\fruu{a}{a+1}{a+b-1}$ to the nodes from $B$.
Call this assignment $I: A\cup B \rightarrow \mathcal{N}$. From now on we identify the nodes with their numbers.
The label $\bipnodelabel(u)$ of a node $u$ consists of the assigned number $I(u)$, parameters $a$, $b$, $\alpha$, $\beta$ 
($\Oh(\log N)$ bits in total) and a bit table $T_u[\cdot]$. 
If $u \in A$, the encoder sets 
\begin{equation*}
T_{u}[i]= 1 \iff (u, a + (\ceil{bu/a} +i )\bmod b) \in E \text{, for } i = \fru{0}{\alpha-1}.
\end{equation*}
If $u \in B$, the encoder sets
\begin{equation*}
T_u[j]= 1 \iff ((\ceil{a(u-a)/b} +j) \bmod a, u) \in E \text{, for } j= \fru{0}{\beta-1}.
\end{equation*}

In total labels have size $\alpha+ \Oh(\log N)$ for nodes from $A$ and $\beta + \Oh(\log N)$ for nodes from $B$.
Now we describe the decoder.
Let $u$, $v\in A \cup B$. Using $\bipnodelabel(u)$ and $\bipnodelabel(v)$, the decoder has to determine whether $(u,v) \in E$.
First, it can check whether both nodes belong to the same layer (based on $I(u)$, $I(v)$, and value $a$).
Assume that the nodes are in different layers (otherwise they are not adjacent)
and $u\in A, v\in B$ (by swapping the nodes if necessary).
Let $i_a=I(u)$, $i_b=I(v)-a$. We have $i_a \in \fru{0}{a-1}$, $i_b \in \fru{0}{b-1}$.
Let
\begin{equation*}
i=(i_b-\ceil{bi_a/a}) \bmod b,  \quad j=(i_a-\ceil{ai_b/b}) \bmod a.
\end{equation*}
If $i \in \fru{0}{\alpha-1}$, then $T_u[i]=1 \iff (u,v) \in E$.
If $j \in \fru{0}{\beta-1}$, then $T_v[j]=1 \iff (u,v) \in E$.
In both cases, the decoder can look at the right bit of the table and answer the question $(u,v) \in E$ in constant time.
So it is enough to show that for every $i_a$, $i_b$ at least one of the above holds.
When $\alpha \geq b$ or $\beta \geq a$ thesis is trivially satisfied for all $i_a$, $i_b$.
Otherwise 
\begin{equation*}
 i=\left(i_b-\left\lceil \frac{bi_a}{a} \right\rceil\right) \bmod b = \left\lfloor \frac{ai_b-bi_a}{a} \right\rfloor \bmod b,
 \end{equation*}
\begin{equation*}
 j=\left(i_a-\left\lceil \frac{ai_b}{b} \right\rceil\right) \bmod a = \left\lfloor\frac{bi_a-ai_b}{b}\right\rfloor \bmod a.
\end{equation*}
Let $w=ai_b-bi_a$. From the constraints on $i_a$, $i_b$:
\begin{equation*}
 -b(a-1) \leq w\leq a(b-1).
\end{equation*}
If $w=0$, then $i=j=0$ and we are done. 
Suppose that $w>0$, the opposite case is similar. We have 
\begin{equation*}
  i < \alpha \iff \left\lfloor\frac{w}{a}\right\rfloor \bmod b < \alpha \iff \left\lfloor\frac{w}{a} \right\rfloor < \alpha \iff  w< a\alpha,
\end{equation*}
and
\begin{equation*}
\begin{split}
 j < \beta \iff \left\lfloor\frac{-w}{b}\right\rfloor \bmod a < \beta \iff \left\lfloor\frac{ab-w}{b} \right\rfloor < \beta \\
   \iff  ab-w <b\beta \iff ab-b\beta <w.
\end{split}
\end{equation*}
From the assumption $ab-b\beta<a\alpha$, thus at least one of the above inequalities is satisfied.
\end{proof}

\end{document}